\DeclareMathOperator{\Tr}{Tr} 
\newtheorem{theorem}{Theorem}
\theoremstyle{remark}
\newtheorem{remark}{Remark}
\begin{document}

\preprint{APS/123-QED}

\title{Security of the decoy state method for quantum key distribution}

\author{A.S. Trushechkin}
\affiliation{Steklov Mathematical Institute, Russian Academy of Sciences, Moscow 119991, Russian Federation}
\affiliation{Department of Mathematics and NTI Center for Quantum Communications, National University of Science and Technology MISiS, Moscow 119049, Russia}

\author{E.O. Kiktenko}
\affiliation{Steklov Mathematical Institute, Russian Academy of Sciences, Moscow 119991, Russian Federation}
\affiliation{Department of Mathematics and NTI Center for Quantum Communications, National University of Science and Technology MISiS, Moscow 119049, Russia}
\affiliation{Russian Quantum Center, Skolkovo, Moscow 143025, Russia}
\affiliation{Moscow Institute of Physics and Technology, Dolgoprudny, Moscow Region 141700, Russia}

\author{D.A. Kronberg}
\affiliation{Steklov Mathematical Institute, Russian Academy of Sciences, Moscow 119991, Russian Federation}
\affiliation{Russian Quantum Center, Skolkovo, Moscow 143025, Russia}
\affiliation{Moscow Institute of Physics and Technology, Dolgoprudny, Moscow Region 141700, Russia}

\author{A.K. Fedorov}
\affiliation{Russian Quantum Center, Skolkovo, Moscow 143025, Russia}
\affiliation{Moscow Institute of Physics and Technology, Dolgoprudny, Moscow Region 141700, Russia}

\date{\today}
\begin{abstract}
Quantum cryptography or, more precisely, quantum key distribution (QKD), 
is one of the advanced areas in the field of quantum technologies. 
The confidentiality of keys distributed with the use of QKD protocols is guaranteed by the fundamental laws of quantum mechanics. 
This paper is devoted to the decoy state method, a countermeasure against vulnerabilities caused by the use of coherent states of light for QKD protocols whose security is proved under the assumption of single-photon states. 
We give a formal security proof of the decoy state method against all possible attacks. 
We compare two widely known attacks on multiphoton pulses: photon-number splitting and beam splitting. 
Finally, we discuss the equivalence of polarization and phase coding.
\end{abstract}
\maketitle 

\section{Introduction}

Information protection is one of the key demands of modern society.
In most cases, information security is ensured by using cryptographic techniques, such as encryption.\,Encryption is commonly understood as the transformation of information that is needed to be secured (plaintext) into an encrypted message (ciphertext) with the use of a certain algorithm~\cite{Kuzmin2002}.
At the same time, to realize encryption, the legitimate parties of the communication need a so-called cryptographic key, which is a secret parameter (usually a binary string of a certain length),
which determines the choice of a specific transformation of information when performing encryption.
The key distribution problem is one of the most important in cryptography~\cite{Kuzmin2002, Schneier1996}.
For example, Ref.~\cite{Schneier1996} emphasizes: 
``Keys are as valuable as all the messages they encrypt, 
since knowledge of the key gives knowledge of all the messages. 
For encryption systems that span the world, the key distribution problem can be a daunting task.''

Several approaches to the distribution of cryptographic keys can be used.
First, the keys can be delivered using trusted couriers.
The main disadvantage of this method is the presence of the human factor.
Furthermore, with the increasing amount of transmitted data keys every year, physical transfer is becoming more difficult.
An alternative approach is public-key cryptography.
It is based on the use of so-called one-way functions, i.e., functions that are easy to compute but for which it is difficult to find an argument for a given function value.
Examples include the Diffie-Hellman and RSA (abbreviation from the surnames Rivest, Shamir, and Adleman) algorithms (which were developed for encrypting messages, but are also used for key distribution), 
which use the complexity of solving discrete logarithm and integer factorization problems, respectively.
The majority of data transmitted on the Internet is protected with the use of public-key algorithms, which are included in the HTTPS (HyperText Transfer Protocol Secure) protocol.

Today, efficient ({of polynomial complexity}) 
classical algorithms for solving factorization and discrete logarithm problems are unknown, but there is the efficient quantum Shor algorithm~\cite{Shor1997}.
Therefore, when a quantum computer becomes available for an adversary, the widely used algorithms for key distribution would no longer provide information security.
Besides, the impossibility of the existence of efficient classical algorithms for solving these problems has also not been proven, and it is only a conjecture.
There are currently no quantum computers capable of implementing Shor's quantum algorithm for fairly large numbers up to date, but such devices may appear in the near future.
Also, alternative approaches to solving the factorization problem are being developed, for example,
using variational quantum algorithms~\cite{Aspuru-Guzik2018}.
Thus, ensuring the confidentiality of data requires an early transition to cryptographic methods that would be resistant to attacks using a quantum computer~\cite{Mosca}.

One possible solution for key distribution in the `post-quantum era' is to use quantum key distribution (QKD), 
which was proposed by C.~Bennett and G.~Brassard in 1984~\cite{BB84}, 
and also independently by A.~Ekert in 1991~\cite{Ekert1991}.
Quantum key distribution is based on the idea of encoding information into the quantum states of individual quantum systems. 
For example, the most common QKD protocol BB84~\cite{BB84} uses photon states (for example, polarization) in two mutually unbiased bases~\cite{Holevo2018}.
The idea of using mutually unbiased bases was introduced by S.~Wiesner in the 1970s (his paper~\cite{Wiesner} was published later, in 1983) within the concept of ``quantum money''.
Fundamental limitations of quantum mechanics, such as the no-cloning theorem for a quantum state and the Heisenberg uncertainty principle, restrict the ability to read quantum information without changing it.

The first complete mathematical proof of the security of the BB84 protocol
was obtained in 1996 by D.~Mayers~\cite{Mayers0,Mayers}. 
Further proofs have followed, for example, Refs.~\cite{ShorPreskill, Koashi}.
The general mathematical theory of quantum key distribution, which is based on information theory and entropy characteristics, was developed by R.~Renner~\cite{Renner} 
and then improved in subsequent papers, 
among which we highlight Refs.~\cite{GisinRenner, Toma}.
One recent result in this area is the development of the ``entropy accumulation'' technique~\cite{EntroAccum}, 
which allows one to map the security proof for the case of so-called collective attacks (easier to analyze) to the case of general attacks (coherent attacks) that are difficult to analyze directly~\cite{Gisin2002,Scarani2009}.
For the case of BB84, it simplifies the security proof and also allows one to prove security against general attacks more generally, for example, in the case of detection-efficiency mismatch~\cite{Bochkov2019}.
For the operational meaning of the security parameter used in quantum key distribution, see Refs.~\cite{Portmann,TrushQElectronics}.

However, in practical implementations of QKD protocols, 
differences from ideal abstract protocols naturally arise~\cite{Scarani2009,L0216}.
These differences have a significant impact on the cryptographic security of implementation.
For example, the BB84 protocol assumes the encoding of information into states of single photons.
However, the generation of single photons ``on demand'' and at a high rate is a technically difficult task; 
therefore, instead of single photons, QKD uses weak coherent pulses~\cite{Huttner1995}.
This makes it possible to realize a photon-number splitting attack~\cite{Sanders2002}: 
The adversary can measure the number of photons in a pulse in a non-disturbing way, take one photon from all multiphoton states, and block all single-photon (i.e.~secure) states, or at least some of them.
This allows the adversary to extract the entire key or a significant part of it without introducing errors, 
which violates the basic property of the protocol about the relationship between the error level and the amount of intercepted information and, thus, makes the protocol implementation unreliable.

The vulnerability of quantum key distribution to such an attack can be eliminated by using the decoy state method, which was proposed in Refs.~\cite{Hwang2003,Lo2005,Wang2005,MaLo2005}.
The method is based on the fact that the sender does not use one fixed value for the intensity of coherent states, but each time randomly selects the intensity from a known finite set (also known to the adversary).
One of these intensities (the largest) is the signal one, and it is used to generate a key; the rest are decoy, and they are used to estimate the level of the adversary's intervention in the multiphoton pulses.

Being just a parameter in the probability distribution (namely, the Poisson distribution), the intensity is a non-observable quantity, so that only the photon number that is the realization of this random variable is observed.
Therefore, the adversary does not know what intensity was used in a given position and carries out his actions based only on the observed number of photons.

After finishing the transmission of quantum states, the sender announces the intensity of each position.
The legitimate parties then collect registration statistics for each intensity separately.
The states with lower intensities can be called decoy states in the sense, that for an adversary, 
who does not know the intensity parameter at the time of the attack, 
they are indistinguishable from the signal states, but after the intensities are announced, 
they become a sort of ``tagged'', which can be used to make separate statistics.

From a mathematical point of view, the statistics of detecting states with different intensities gives the legitimate parties additional equations for a better estimation of unknown parameters, 
such as the number of positions in the sifted key obtained from single-photon pulses 
(i.e., those that cannot be intercepted without introducing errors) and the fraction of errors in them.
In particular, blocking all single-photon states by the adversary would lead to the blocking almost all low-intensity decoy states.

Today, the decoy-state BB84 QKD protocol has been theoretically studied in detail~\cite{Hwang2003,Lo2005,Wang2005,MaLo2005,Lo2014,Lim2014,Ma2017,Trushechkin2017,Bochkov2019},
demonstrated in experiments~\cite{Lo20142,L0216}, including Russian domestic systems~\cite{Duplinskiy2018}, and it is considered to be a candidate for the international standard~\cite{Standard}.
Nevertheless, doubts about its security against all possible attacks, not just photon-number splitting attacks, 
have been expressed~\cite{Molotkov2019,Molotkov20192}.
For this reason, in this article, we not only describe the decoy state method but also provide the formal security proof of the decoy state method against all possible attacks.
This fact is usually not explained in the literature, since it is regarded obvious.
The proof would not refer to photon-number splitting attack.
However, as demonstrated, the photon-number splitting attack is optimal for the adversary, which explains the fact that countermeasures only against this attack are considered in the literature.

Separately, we compare the photon-number splitting attack with another common attack, namely, 
the beam splitting attack, and clearly demonstrate the lower efficiency of the latter.
At the same time, at realistic levels of losses, as it has been shown, these attacks give similar results.

The text is organized as follows.
In Sec.~\ref{sec:SecProtocol}, the BB84 protocol is described.
In Sec.~\ref{sec:SecPNS}, the problem in this protocol, which is related to the encoding information using coherent instead of single-photon states, is discussed, and the photon-number splitting attack is considered.
In order to formulate further results concerning the security of the protocol, 
in Sec.~\ref{sec:SecRate}, the concepts of achievable and maximally achievable secret key rates are introduced, 
and the well-known Devetak--Winter formula~\cite{DW} for the maximal secret key rate is presented.
Sec.~\ref{sec:SecReduc} is devoted to reducing the multiphoton case to the single-photon case if the transmitted states are statistical mixtures of Fock states (states with a certain number of photons).
The theorem in this section formally justifies the decoy state method and its security against all possible collective attacks.
Here, the optimality of the photon-number splitting attack is justified. 
Sec.~\ref{sec:SecEntroAccum} completes this analysis: 
An equivalent formulation of the protocol in terms of an entangled state allows using the entropy accumulation technique and justifies the security of the protocol to all possible (not just collective) attacks.
Sec.~\ref{sec:SecDecoy} is devoted to describing the decoy state method. 
In Sec.~\ref{sec:SecBS}, another attack, namely the beam splitting attack, is considered, which, unlike the photon-number splitting attack, has been implemented experimentally.
In Sec.~\ref{sec:SecProof}, we compare the photon-number splitting and beam splitting attacks and explain why the latter is less efficient. 
Finally, in Sec.~\ref{sec:SecPhaseCoding}, we respond to the doubts expressed in Ref.~\cite{KulikMol} about the applicability of the decoy state method in the case of the phase encoding rather than polarization encoding, 
and demonstrate the complete equivalence of these encodings.

It should be noted that the field of quantum information processing has already become the subject of several reviews in {\it Physics-Uspekhi}.
In particular, a number of problems in quantum computing was considered in Refs.~\cite{Kilin1999,Valiev2005}, 
and several aspects of quantum key distribution were considered in Ref.~\cite{Molotkov2006}.
The progress of recent decades on the development of industrial devices for quantum key distribution has taken research in this area to the next level, 
where a number of important practical aspects of the implementation of such systems have come to the fore.
In particular, one of the central problems of this area is to analyze the resistance to attacks of quantum key distribution protocols, 
which is the subject of this review.
The main result is a rigorous justification of the security of the decoy state method in quantum cryptography.

\section{BB84 protocol}\label{sec:SecProtocol}

In this section, we describe the BB84 protocol~\cite{BB84} under the assumption of a single photon source on the sender side.
Each QKD protocol can be divided into two main stages: 
Transmission of quantum states and post-processing of the measurement results.
Following Ref.~\cite{Kuzmin2002}, the communication participants that desire to obtain a shared key we call the sender and the receiver, while the eavesdropper side is the adversary.
Sender and receiver together are also referred to as legitimate parties.

In the first ``quantum'' stage of the BB84 protocol, four quantum states are used, which form two orthogonal bases $z{=}\{\ket0_z,\ket1_z\}$ and $x{=}\{\ket0_x,\ket1_x\}$ in the two-dimensional Hilbert space $\mathbb{C}^2$.
The quantum system that corresponds to this space is called the ``quantum bit'' or ``qubit''. 
The values 0 or 1 indicate which classic bit is encoded by the corresponding basis vector. 
Elements of the bases are expressed in terms of elements of another basis according to the relations
\begin{equation}\label{eq:bb84basis2}
	\ket0_x=\frac{\ket0_z+\ket1_z}{\sqrt2},
	\qquad
	\ket1_x=\frac{\ket0_z-\ket1_z}{\sqrt2}.
\end{equation}

If the information is encoded into photon polarization, then the vectors $\ket0_z$ and $\ket1_z$  can correspond, for example, to horizontal and vertical polarizations. 
In this case, $\ket0_x$ and $\ket1_x$ correspond to two diagonal polarizations that are rotated by 45$^\circ$ and 135$^\circ$ degrees, respectively, relative to the horizontal direction. 
We assume polarization coding for convenience of the presentation, but, in fact, no restriction is imposed on the method of information encoding: 
Formally, $\ket0_z$, $\ket1_z$, $\ket0_x$, and $\ket1_x$ are vectors in the Hilbert space and one can use any encoding which fulfills relation (\ref{eq:bb84basis2}).
In particular, we explain the equivalence of the polarization and phase encodings below in Sec.~\ref{sec:SecPhaseCoding}.

As is seen from (\ref{eq:bb84basis2}), when measuring a qubit in a basis different from the preparation basis, the result is a random value. 
In the case of coincidence of the preparation and measurement bases, 
the result perfectly correlates with the prepared state of the qubit 
(in the ideal case, i.e., in the absence of errors in the channel and measuring devices).

Let us now describe the BB84 protocol.
\begin{enumerate}
	\item 
	The sender randomly chooses a basis from the set $\{z,x\}$ and the value of the transmitted bit of information: $1$ or $0$. 
	Bits are selected with equal probabilities of 1/2.
	\item
	Then, the photons prepared in the corresponding states are transmitted through the quantum channel.
	\item
	The receiver randomly chooses a measurement basis, $z$ or $x$, for each qubit and measures the state of the qubit in the selected basis.
	If the preparation and measurement bases coincide, the received bit value coincides (ideally) with the sent one.
	If the bases do not coincide, the bits of the sender and the receiver do not correlate (that is, they may or may not coincide with equal probabilities) due to the fact that the bases are mutually unbiased (\ref{eq:bb84basis2}). 
    Usually, the communication channel contains large losses; 
    therefore, not all positions are registered by the receiver.
	\item
	The above steps are repeated many times, i.e., a large number of quantum states are transmitted. 
	As a result, legitimate parties receive two sequences of bits $k^{\rm raw}_A$ and $k^{\rm raw}_B$, which are called the {\it quantum keys}.
\end{enumerate}

Since a perfect copy of a quantum state cannot be created and the adversary does not know the basis in which the bit is encoded in a given position, the adversary needs to employ imperfect copying techniques that induce distortions.

In the original version of the protocol, the bases are chosen with equal probabilities.
Later, an improved variant of the protocol was suggested, in which one of the bases 
(for example, the $z$ basis) is chosen more frequently than the other one~\cite{AsymBB84}. 
This reduces the number of basis mismatches and, therefore, 
the portion of sifted positions, i.e., it increases the key rate. 
Let us denote the probabilities of choices of the bases as $p_z$ and $p_x=1-p_z$.
In the limit of an infinite number of pulses $N$, one can put $p_z\to1$, $p_x\to0$. 
For example, one can take $p_x=O(1/\sqrt N)$: 
This is enough for the statistical estimation of the parameters related to observations in the $x$ basis, since the statistical fluctuations have an order of $O(1/\sqrt{N})$. 
Then, the following modification of the protocol can be adopted: 
Only positions in which both parties used the $z$ basis are used for key generation.
The bits encoded in the $x$ basis do not participate in the formation of the secret key; 
they are only needed to estimate the level of eavesdropping.
Note that a version of the protocol in which the bases are chosen pseudo-randomly 
using a pre-distributed random sequence is considered in Ref~\cite{Trushechkin2018}.

At the second stage, the legitimate parties carry out the classical post-processing of raw keys using communication a over public authenticated channel~\cite{Gisin2002,Fung2018,Kiktenko2016}. 
It consists of the following steps:
\begin{enumerate}
	\item 
	{\it Announcements}. 
	The receiver announces the position numbers in which the signal was registered.
	The sender and receiver declare the bases used in all positions.
	When using the decoy state method, the sender also announces the type of each pulse (signal or decoy).
	The sender and the receiver can also announce bits in positions that do not participate in the formation of the secret key: 
	In positions in which the parties used the $x$ basis and in the decoy pulses.

	\item
	{\it Key sifting}.
	Positions in which decoy intensity were used, registration did not occur, or at least one of the legitimate parties used the $x$ basis are sifted out.
	The resulting keys, $k^{\rm sift}_A$ and $k^{\rm sift}_B$, are called {\it sifted keys}.
	Ideally, they should match, but as a result of natural noise in the channel or adversary actions, 
	they do not match.
	Moreover, the adversary may have partial information about them.
	
	\item
	{\it Error correction}.
	One of the sifted keys (for example, belonging to the sender) is considered to be a reference. 
	Differences between it and the sifted key of the other side are considered to be errors.
	One can use error correction codes or interactive error correction procedures to correct errors.
	Low-density parity-check (LDPC) codes are quite common.
	Often, this procedure ends with {\it verification}: 
	The identity of the sifted keys is checked using hash functions (see Ref~\cite{Fedorov2017}).
	As a result of this stage, the legitimate parties receive identical {\it verified keys} 
	$k^{\rm ver}_A=k^{\rm ver}_B$ with a high probability.
	An efficient method for error correction in the BB84 protocol error correction based on LDPC codes is described in Ref.~\cite{Kiktenko2017}; see also Ref.~\cite{Kronberg2017}.

	\item
	{\it Estimation of the level of eavesdropping} and making a decision about creating the key or renouncing it (aborting the protocol) based on the observed data.
	Quantum cryptography is based on the fact that information encoded in non-orthogonal quantum states cannot be read by a third party 
	(which does not know the basis in which the key bit in a given position was encoded) 
	without ``spoiling'' these states.
	Therefore, interception by the adversary would lead to an increase in the number of errors (i.e., mismatched positions in sifted keys) between the legitimate parties.
	In this version of the protocol, where only the bits encoded in the $z$ basis are involved in the formation of the key, only the fraction of errors in the $x$ basis is needed to assess the 
	level of eavesdropping.
	If the error rate exceeds a certain critical threshold, the protocol is aborted.
	Otherwise, the parties proceed to the last step.
	Thus, in quantum cryptography, it is impossible to eavesdrop without being detected.
	
	\item 
	{\it Privacy amplification.} 
	The sender randomly chooses a so-called hash function from a family of $2$-universal hash functions and sends it to the receiver via a public channel.
	Both then compute the hash value of their (identical) sifted keys.
	As a result, they get a common shorter key (\textit {final key}) $k^{\rm fin}_A=k^{\rm fin}_B$, but the information of the adversary about which is negligible.
	With an infinitely large length of the sifted key, it can be made arbitrarily small.
	The more information the adversary has about the sifted key (as a result of eavesdropping and as a result of disclosure by legitimate users of some of the information during error correction),
	the more compression of the key in the privacy amplification procedure is required, 
	i.e., the shorter the final key and the lower the key rate.
\end{enumerate}

As can be noted, the post-processing procedure requires the parties to communicate via the classical channel.
It is assumed that the adversary is freely able to listen to this channel but cannot change the messages transmitted over it or send adversary's one.
This is provided by message authentication codes.
In quantum cryptography, the information-theoretically secure Wegman-Carter codes~\cite{WegmanCarter1981} are often used (i.e., provably secure without assumptions about the adversary's computing power).
They require legitimate parties to have an initial short secret key.
It is sufficient to have an initially shared secret for the first key distribution session.
Further, in each session, part of the generated key is kept for use in a message authentication code in the next session and is not used for other purposes.
See Ref.~\cite{Kiktenko2019} for the latest developments in the reduction of the portion of the key 
that is consumed during authentication.

Here we assumed that the sender is sending single-photon states.
This corresponds to the fact that thier Hilbert space is $\mathbb{C}^2$.
At the end of the Introduction, there are references regarding the history of the security proof for the single-photon BB84 protocol.

\section{Information encoding in weak coherent pulses. Photon-number splitting attack}\label{sec:SecPNS}

Let us now take into account that, in practice, information is most often encoded not into true single-photon states but into weak coherent pulses.
This means that the Hilbert space of the sender is not $\mathbb C^2$, but $\mathscr F(\mathbb C^2)$ 
(the boson Fock space over $\mathbb C^2$).
The orthonormal basis in this space is formed by the vectors $\ket{\rm vac}$ and $\ket{j_0,j_1}_z$, where $\ket{\rm vac}$ is the vacuum state, and
\begin{equation}\label{Eqj0j1z}
	\ket{j_0,j_1}_z=\frac{(a^\dag_{z0})^{j_0}(a^\dag_{z1})^{j_1}}{\sqrt{j_0!j_1!}}\ket{\rm vac},
\end{equation} 
$j_0,j_1\geq0$ ($\ket{0,0}_z=\ket{\rm vac}$), $a^\dag_{z0}$, $a^\dag_{z1}$, $a_{z0}$, $a_{z1}$ are creation and annihilation operators of a photon in the states $\ket0_z,\ket1_z\in\mathbb{C}^2$, respectively. 
Another orthonormal basis is formed by vectors $\ket{\rm vac}$ and $\ket{j_0,j_1}_x$, $j_0,j_1\geq0$, where 
\begin{equation}\label{Eqj0j1x}
	\ket{j_0,j_1}_x=\frac{(a^\dag_{x0})^{j_0}(a^\dag_{x1})^{j_1}}
	{\sqrt{j_0!j_1!}}\ket{\rm vac},
\end{equation}
\begin{equation}\label{Eqax}
	a^\dag_{x0}=\frac{a^\dag_{z0}+a^\dag_{z1}}{\sqrt2}, \qquad 
	a^\dag_{x1}=\frac{a^\dag_{z0}-a^\dag_{z1}}{\sqrt2}
\end{equation}
are creation and annihilation operators of photons in states $\ket0_x,\ket1_x\in\mathbb C^2$, respectively. 
In particular, vectors $\ket 0_b, \ket1_b\in\mathbb C^2$ can be identified with vectors $\ket{1,0}_b,\ket{0,1}_b\in\mathscr F(\mathbb C^2)$, respectively, $b\in\{z,x\}$.

Then, the transmitted coherent state has the form
\begin{equation}\label{EqCohState}
	\ket{\alpha;u}_b=e^{-\mu/2}\ket{\rm vac}+
	e^{-\mu/2}\sum_{j=1}^\infty 
	\frac{\alpha^j}{\sqrt{j!}}\ket{\psi^b_{ju}},
\end{equation}
where $u\in\{0,1\}$, $b\in\{z,x\}$,
\begin{equation*}
	\ket{\psi^b_{j0}}=\ket{j,0}_b,\quad \ket{\psi^b_{j1}}=\ket{0,j}_b,
\end{equation*}
$b\in\{z,x\}$. 
Here $\alpha\in\mathbb C$ is the parameter of the coherent state, 
$\mu=|\alpha|^2$ is the pulse intensity, $\alpha=\sqrt{\mu}e^{i\theta}$. 
The protocol demands a random change in the phase $\theta$ of the coherent state from pulse to pulse. 
This is provided either by a laser operating mode (passive randomization)
or by introducing an additional element, 
which is connected to a random number generator and randomizes the phase, 
into the optical system of the sender (active randomization)
(see also Remark~\ref{RemPhRnd} in Sec.~\ref{sec:SecReduc}).
Then, the adversary and the receiver do not ``observe'' a pure coherent state but a mixed one with the density operator
\begin{equation}\label{EqRhoPhRnd}
\begin{split}
	\rho^b_{\mu u}&=\frac1{2\pi}
	\int_0^{2\pi}d\theta\,
	\ket{\sqrt{\mu}e^{i\theta};u}_b\bra{\sqrt{\mu}e^{i\theta};u} \\
	&=
	e^{-\mu}\ket{\rm vac}\bra{\rm vac}+
	e^{-\mu}\sum_{j=1}^\infty\frac{\mu^j}{j!}\ket{\psi^b_{ju}}\bra{\psi^b_{ju}}.
\end{split}
\end{equation}
Here, for simplicity, we assume that the basis $b$ and the bit $u$ are fixed; otherwise, it is necessary to average over them.
This mixed state can be interpreted as follows: The vacuum state is sent with the probability $e^{-\mu}$, whereas the state with the $j$ photon in the corresponding polarization is sent with the probability $e^{-\mu}\mu^j/j!$.
Thus, the number of photons in the pulse is distributed according to the Poisson distribution with the parameter (average number of photons) $\mu$.
Usually, $\mu<1$ is chosen; therefore, such impulses are referred to as weak coherent.

The presence of more than one photon in some pulses allows the adversary to carry out a so-called {\it photon-number splitting attack}~\cite{Sanders2002}.
Sometimes this is understood as an attack of a special type, but we mean a whole class of attacks, which we describe now.
Each attack of this class begins with the adversary measuring the number of photons, 
i.e., applying an observable which corresponds to the probability projection-valued measure $\{P_j\}_{j=0}^\infty$, where 
\begin{equation}\label{EqQND}
\begin{split}
	P_j=&\sum_{j_0+j_1=j}\ket{j_0,j_1}_z\bra{j_0,j_1}= \\
	&\sum_{j_0+j_1=j}\ket{j_0,j_1}_x\bra{j_0,j_1}.
\end{split}
\end{equation}
For an outcome $j$, state (\ref{EqRhoPhRnd}) transforms into 
$P_j\rho_{\mu u}^b P_j/\Tr(\rho_{\mu u}^b P_j)$.
This measurement is also called nondemolition, because the photons are not destroyed, 
and their polarization is not changed.
In practice, such a nondemolition measurement has not yet been implemented, 
but it is possible in theory, since the corresponding probability projector-valued measure exists.
If the adversary observes that there are two or more photons in the transmission ($j\geq2$), 
then they take one photon into thier quantum memory and send the rest to the receiver.
After the bases are announced via the public channel, 
the adversary would find out which basis in a given position the key bit is encoded.
They measure with this basis and recognize this bit. 
In this case, the state of those photons that were sent to the receiver does not change.
This violates the basic principle of quantum cryptography that the observation leads to changes in the state and occurrence of (or increasing) errors.
If the pulse contains one photon, the adversary can still find out the information encoded in it only at the cost of changing the state and, accordingly, introducing errors.
Therefore, the adversary can block all or part of one-photon states, i.e., 
stop their transmission to the receiver, simulating natural losses in the communication channel.
They can attack unblocked single-photon states in the usual way (at the cost of introducing errors).

The blocking of some pulses by the adversary leads to an increase in the level of losses, 
which can be detected by the legitimate parties.
Therefore, it is assumed that the adversary can replace the communication line and even detectors with ideal ones and then block as many single-photon pulses as possible to reproduce the natural level of loss.
The higher the level of natural losses (which depends primarily on the length of the communication line, see Sec.~\ref{sec:SecDecoy}), the more single-photon states the adversary can block.
If the natural losses in the channel are so large that the adversary can block all single-photon states,
then the adversary would have full information about the key without introducing noise, since all the signals that reach the receiver would be multiphoton.
Thus, the quantum key distribution protocol is completely compromised.

\section{Secret key rate}\label{sec:SecRate}

The {\it final (secret) key rate} is understood as the limit of the ratio of the length of the final key to the length of the sifted key, as the number of pulses tends to infinity.
The {\it achievable} secret key rate means the secret key rate at which it is possible to ensure the requirement of an infinitesimal amount of information of the adversary about the final key (in the mentioned limit).
As mentioned above, this requires a certain level of compression in the privacy amplification procedure.
Of course, if a certain rate is achievable, then lower rates are also achievable.
The {\it maximally achievable} (or simply {\it maximal}) secret key rate is called the exact upper bound of the set of achievable rates.
For formal definitions, see Ref.~\cite{DW}.

Note that sometimes the key rate is defined as the limit of the length of the final key to the number of pulses.
Since the length of the sifted key is proportional to the number of pulses, 
these definitions coincide up to a constant factor.

The formula for the maximum secret key rate can be written using the Devetak--Winter theorem~\cite{DW} 
(see also section~\ref{sec:SecEntroAccum}).
Its formulation involves a three-particle state of the sender, the receiver, and the adversary, 
which emerges after the sifting procedure.
Our immediate goal is to write the formula for this state.

Let us introduce a register (formally, a quantum system) $\overline{A}$ of dimension 2, 
which stores the bit value of the sender.
We denote a subsystem that contains the transmitted quantum state~(\ref{EqRhoPhRnd}) as $A$.
Then, the joint state $\overline A$ and $A$ when using the $z$ basis has the form
\begin{equation}\label{EqRhoAA}
	\rho_{\overline AA}^z=\frac12\sum_{u=0}^1
	\ket u_{\overline A}\bra u\otimes\rho^z_{\mu u},
\end{equation}
where $\rho^z_{\mu u}$ is given by Eq.~(\ref{EqRhoPhRnd}).

Due to the action of natural noise, natural losses in the channel, and/or actions by the adversary, the state $\rho^z_{\mu u}$ undergoes the action of some quantum channel,
i.e., linear completely positive trace-preserving map $\Upsilon_0\colon\mathfrak T(\mathscr H_A)\to\mathfrak T(\mathscr H_B\otimes\mathscr H_E)$,
where $\mathscr H_A=\mathscr H_B=\mathscr F(\mathbb C^2)$ are Hilbert spaces of the sender and the receiver.
$\mathscr H_E$ is some (unknown) Hilbert space of the adversary, and
$\mathfrak T$ is the space of nuclear operators in the corresponding Hilbert space.

The quantum efficiencies of the detectors (which are different in general) and the probabilities of dark counts (also, in general, different) can also be included in the channel $\Upsilon_0$:
In the first case, the state with photons turns, with some probability, into the state without photons, and the other way around in the second case.
It should be noted that enabling these effects makes $\Upsilon_0$ basis-dependent.
The transformation of the state in the communication line does not depend on the basis, since the adversary does not know the basis at the time of the attack, and the natural noise and losses do not depend on the basis.
But the further transformation of the state associated with the measurement is already dependent on the measurement basis.
We are now interested in positions in which both sides use the $z$ basis, since only such positions participate in the key formation.
The measurement in the $x$ basis would correspond to another transformation, but we will not need it.

After this, an ``ideal'' measurement (we have already taken into account imperfect detector efficiencies and dark counts in $\Upsilon_0$) corresponds to the probability projector-valued measure
$\{\bar P_{\varnothing},\bar P_0,\bar P_1,\bar P_{01}\}$, where $\bar P_\varnothing=\ket{\rm vac}\bra{\rm vac}$ and $\bar P_\varnothing=\ket{\rm vac}\bra{\rm vac}$ corresponds to the no click event in both detectors,
$$
	\bar P_0=\sum_{j=1}^\infty\ket{j,0}_z\bra{j,0}\quad \text{ and }\quad \bar P_1=\sum_{j=1}^\infty\ket{0,j}_z\bra{0,j}
$$ 
correspond to the click of exactly one detector, 
and $\bar P_{01}=\sum_{j,k=1}^\infty\ket{j,k}_z\bra{j,k}$ corresponds to the click of both detectors. 

So, the vacuum component of the receiver state is not registered.
Therefore, the corresponding positions are sifted out and do not participate in the key formation.
Recall that our immediate goal is to write a formula for the three-particle state of the sender, 
the receiver, and the adversary, which emerges after the sifting procedure.
We denote
\begin{equation*}
\begin{split}
	\Upsilon(\rho_A)=[(I_B-\ket{\rm vac}\bra{\rm vac})\otimes I_E]\times \\
	\times\Upsilon_0(\rho_A)\,
	[(I_B-\ket{\rm vac}\bra{\rm vac})\otimes I_E]
\end{split}
\end{equation*}
for any $\rho_A\in\mathfrak T(\mathscr H_A)$. 
Here $I_B$ and $I_E$ are the identity operations in the spaces $\mathscr H_B$ and $\mathscr H_E$, respectively.
Obviously, the map $\Upsilon$ does not preserve the trace; It only does not increase it: $\Tr\Upsilon(\rho_A)\leq\Tr\rho_A$. 
The reason for the non-conservation of the trace is that $\Upsilon$ includes sifting, which reduces the trace. 

Then, the desired formula for the three-particle state of the sender, receiver, 
and adversary after sifting has the form
\begin{equation}\label{EqRhoABE}
\begin{split}
	\rho_{\overline ABE}&=(Q^{{\rm s}z})^{-1}({\rm Id}_{\overline A}\otimes\Upsilon)(\rho_{\overline AA}^z)=\\ 
	&=\frac1{2Q^{{\rm s}z}}\sum_{u=0}^1
	\ket u_{\overline A}\bra u\otimes\Upsilon(\rho^z_{\mu u}),
\end{split}
\end{equation}
where 
\begin{equation*}
	Q^{{\rm s}z}=\Tr\,({\rm Id}_{\overline A}\otimes\Upsilon)(\rho_{\overline AA}^z)
\end{equation*}
is the probability of the registration of a signal pulse (as indicated by the superscript ${\rm s}$, later we would need the probabilities of registration of decoy pulses),
and ${\rm Id}_{\overline A}$ is the identity quantum map of density operators in the space of register $\overline A$, i.e. $\mathbb C^2$.

According to the Devetak--Winter theorem, the maximal secret key rate is as follows:
\begin{equation}\label{EqDW}
	R=
	H(\overline A|E)-
	H(\overline A|B),
\end{equation}
where 
\begin{equation}\label{EqQCondEntro}
	\begin{split}
	H(\overline A|B)&=H(\rho_{\overline AB})-H(\rho_B),
	\\
	H(\overline A|E)&=H(\rho_{\overline AE})-H(\rho_E),
\end{split}
\end{equation}
are quantum conditional entropies.

Here we have used the following convention regarding the denotations of the states of subsystems of a composite system: If the state of the composite system is $\rho_{\overline ABE}$,
then $\rho_{\overline AB}=\Tr_{E}\rho_{\overline ABE}$, $\rho_B=\Tr_{\overline AE}\rho_{\overline ABE}$, $\rho_E=\Tr_{\overline AB}\rho_{\overline ABE}$, etc.
In turn, $H(\rho)=-\Tr(\rho\log\rho)$ is the von Neumann entropy, $\log\equiv\log_2$. 
Since the state $\rho_{\overline{A}BE}$ is classical-quantum (it is classical on $\overline{A}$ and quantum on $BE$), both terms on the right-hand side of Eq.~(\ref{EqDW}) are non-negative.

The first term on the right-hand side of Eq.~(\ref{EqDW}) characterizes the {\it ignorance} or 
the {\it lack of information} of the adversary about the sender's key bit.
The second term (without the minus sign before it) characterizes the 
ignorance (the lack of information) of the receiver about the same bit.
Denote $m$ as the minimal amount of information that the sender has to disclose about thier sifted key, so that the receiver would be able to correct all errors and get a key that matches the sender's key.
If error-correcting codes are used for this purpose, then $m$ is the length of the syndrome.
However, iterative error correction procedures can also be used.
The second term on the right-hand side of (\ref{EqDW}) is the ratio $m/n$ as $n\to\infty$. 
We assume that an optimal error correction code (in the Shannon sense) is used.
Otherwise, the factor $f>1$ has to be added before the second term.
So, for example, the value of the factor $f=1.22 $ is achievable for modern error-correcting codes~\cite{Kiktenko2017}.

According to the Fano inequality, the quantity $H(\overline A|B)$ can be upper bounded by 
$h(E^{{\rm s}z})$, where
\begin{equation*}
	h(x)=-x\log x-(1-x)\log(1-x)
\end{equation*}
is the binary entropy, and $E^{{\rm s}z}$ is the fraction of errors in the sifted keys. 
The superscripts ${\rm s}$ and $z$ indicate that we are talking about signal states and positions, 
in which both sides used the basis $z$.
Recall that only these positions are involved in the secret key formation.
This value becomes known to the legitimate parties after error correction and key verification: After verification, the keys are likely to coincide; therefore, by the number of positions in which corrections have taken place, the legitimate parties know the error rates in the sifted keys.
Then, to obtain a formula for the achievable secret key rate, 
the legitimate parties have to evaluate the first term on the right-hand side of Eq.~(\ref {EqDW}), 
i.e., the adversary's ignorance.
Estimation of the first term is crucial in proving the security of any QKD protocol.

\section{Reducing the multiphoton case to a single-photon one}\label{sec:SecReduc}

Let us reduce the problem of upper bounding $H(\overline{A}|E)$ for the state $\rho_{\overline{A}BE}$,
which includes multiphoton pulses, 
to an estimation of the quantum conditional entropy for a state that includes only single-photon pulses.
We write
\begin{align*}
	&\widetilde\rho^{(0)}_{\overline AA}=\frac12\sum_{u=0}^1
	\ket u_{\overline A}\bra u\otimes P_0\rho^z_{\mu u}P_0= \\
	&=\frac{e^{-\mu}}2
	\sum_{u=0}^1\ket u_{\overline A}\bra u \otimes\ket{\rm vac}\bra{\rm vac},\\
	&\widetilde\rho^{(1)}_{\overline AA}=\frac12\sum_{u=0}^1
	\ket u_{\overline A}\bra u\otimes P_1\rho^z_{\mu u}P_1= \\
	&\frac{\mu e^{-\mu}}2
	\sum_{u=0}^1\ket u_{\overline A}\bra u \otimes\ket u_z\bra u,\\
	&\rho_{\overline{A}BE}^{(j)}=
	(Q^{{\rm s}z}_j)^{-1}
	\,({\rm Id}_{\overline A}\otimes\Upsilon)
	(\widetilde\rho^{(j)}_{\overline AA}), \\
	&
	Q^{{\rm s}z}_j=\Tr\,({\rm Id}_{\overline A}\otimes\Upsilon)
	(\widetilde\rho^{(j)}_{\overline AA}),\quad j=0,1,
\end{align*}
where  $P_j$ is the projection onto the $j$-photon subspace in $\mathscr F(\mathbb C^2)$, see Eq.~(\ref{EqQND}), 
and $Q^{{\rm s}z}_j$ is the joint probability that the sender's signal pulse contains exactly $j$ photons and is registered by the receiver. 
Then, the ratio of the $j$-photon pulses to the pulses registered by the receiver and
participating in the formation of the sifted key is $Q^{{\rm s}z}_j/Q^{{\rm s}z}$.

\begin{theorem}\label{ThDecoy}
For any linear completely positive trace-nonincreasing map  $\Upsilon$, the following inequality holds:
\begin{equation}\label{EqReduc}
    H(\overline A|E)\geq 
    \frac{Q^{{\rm s}z}_0}{Q^{{\rm s}z}}
    +
    \frac{Q^{{\rm s}z}_1}{Q^{{\rm s}z}}
    H(\overline A|E)_{(1)},
\end{equation}
where $H(\overline A|E)$ is calculated for state $\rho_{\overline{A}BE}$ and $H(\overline A|E)_{(1)}$ is calculated for state $\rho_{\overline{A}BE}^{(1)}$.
\end{theorem}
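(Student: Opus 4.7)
My plan is to exhibit an auxiliary classical register $J$ recording the sender's photon number, which Eve could in principle have been given for free, and then to decompose the conditional entropy across photon-number sectors. Because the source is phase-randomized, $\rho^z_{\mu u}$ is already block-diagonal in the Fock basis (Eq.~(\ref{EqRhoPhRnd})), so, writing $\sigma^{(j)}_{\mu u}=P_j\rho^z_{\mu u}P_j/p_j$ with $p_j=e^{-\mu}\mu^j/j!$, the extension
\begin{equation*}
\sigma_{\overline AJA}=\frac12\sum_{u=0}^1\ket u_{\overline A}\bra u\otimes\sum_{j=0}^\infty p_j\ket j_J\bra j\otimes\sigma^{(j)}_{\mu u}
\end{equation*}
satisfies $\Tr_J\sigma_{\overline AJA}=\rho^z_{\overline AA}$ \emph{exactly}. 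Applying $\mathrm{Id}_{\overline AJ}\otimes\Upsilon$ and normalizing by $Q^{{\rm s}z}$ yields a classical-quantum state $\sigma_{\overline AJBE}$ with $\Tr_J\sigma_{\overline AJBE}=\rho_{\overline ABE}$ and marginal weight $Q^{{\rm s}z}_j/Q^{{\rm s}z}$ on $\ket j\bra j_J$.

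The core step is then ``conditioning reduces entropy'',
\begin{equation*}
H(\overline A|E)_{\rho}=H(\overline A|E)_\sigma\geq H(\overline A|EJ)_\sigma,
\end{equation*}
and since $J$ is classical the right-hand side splits as
\begin{equation*}
H(\overline A|EJ)_\sigma=\sum_{j=0}^\infty\frac{Q^{{\rm s}z}_j}{Q^{{\rm s}z}}\,H(\overline A|E)_{\sigma^{(j)}},
\end{equation*}
where $\sigma^{(j)}_{\overline ABE}$ is the normalized conditional state at $J=j$. I would then evaluate the low-$j$ terms and discard the rest: (i) for $j=0$, $\sigma^{(0)}_{\mu u}=\ket{\mathrm{vac}}\bra{\mathrm{vac}}$ is independent of $u$, so $\sigma^{(0)}_{\overline ABE}$ factorizes as $\tfrac12 I_{\overline A}\otimes\tau_{BE}$ and $H(\overline A|E)_{\sigma^{(0)}}=H(\overline A)=1$; (ii) for $j=1$, direct comparison with the definition of $\widetilde\rho^{(1)}_{\overline AA}$ gives $\sigma^{(1)}_{\overline ABE}=\rho^{(1)}_{\overline ABE}$, so this term contributes $(Q^{{\rm s}z}_1/Q^{{\rm s}z})H(\overline A|E)_{(1)}$; (iii) for $j\geq 2$, the state is classical-quantum on $\overline A$, hence $H(\overline A|E)_{\sigma^{(j)}}\geq 0$ and may be dropped to yield~(\ref{EqReduc}).

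The main obstacle, and the reason the argument works at all, is the legitimacy of introducing $J$. It is valid precisely because phase randomization has erased every coherence between Fock sectors, so $\Tr_J\sigma_{\overline AJA}$ equals $\rho^z_{\overline AA}$ on the nose rather than only up to a dephasing map; without this, handing Eve the photon number would strictly increase her information about $\overline A$ and the inequality $H(\overline A|E)_\rho\geq H(\overline A|EJ)_\sigma$ would cease to lower-bound the physical conditional entropy. Everything beyond this point is routine manipulation of classical-quantum conditional entropies and uses no property of $\Upsilon$ other than its linearity and complete positivity, which is what makes the inequality hold for arbitrary channels and thus against arbitrary collective attacks.
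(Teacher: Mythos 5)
Your proof is correct, and it reaches the paper's conclusion by a slightly different route. The decomposition into photon-number sectors $j=0$, $j=1$, $j\geq2$ and the treatment of the three resulting terms ($H(\overline A|E)_{(0)}=1$ because the vacuum carries no information about $u$; the $j=1$ term reproducing $\rho^{(1)}_{\overline ABE}$; nonnegativity of the cq conditional entropy for $j\geq2$) are exactly as in the paper. Where you differ is in the key lemma: the paper writes $H(\overline A|E)=-D(\rho_{\overline AE}\|I_{\overline A}\otimes\rho_E)$ and invokes joint convexity of the quantum relative entropy applied directly to the convex decomposition $\rho_{\overline ABE}=\sum_j(Q^{{\rm s}z}_j/Q^{{\rm s}z})\rho^{(j)}_{\overline ABE}$, whereas you construct an explicit classical flag register $J$ carrying the photon number and use strong subadditivity in the form $H(\overline A|E)\geq H(\overline A|EJ)$ together with the decomposition of $H(\overline A|EJ)$ over a classical $J$. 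These two tools are equivalent in strength (joint convexity of $D$ and strong subadditivity are interderivable), so neither argument is more general; but your version buys an operational reading --- Eve may as well be handed the photon number for free --- which makes the role of phase randomization explicit (block-diagonality is what makes $\Tr_J\sigma_{\overline AJA}=\rho^z_{\overline AA}$ exact) and connects directly to the paper's later discussion of why the photon-number-splitting attack is optimal. The paper's version is marginally shorter since it avoids introducing the auxiliary system. Both correctly use nothing about $\Upsilon$ beyond linearity, complete positivity, and the trace-nonincreasing property needed to normalize the conditional states.
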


\begin{proof}
Quantum conditional entropy can be expressed in terms of the quantum relative entropy:
\begin{equation*}
	H(\overline A|E)=-D(\rho_{\overline AE}\|I_{\overline A}\otimes\rho_E).
\end{equation*}
Recall that the quantum relative entropy is defined as 
$$
	D(\rho\|\sigma)=\Tr(\rho\log\rho)-\Tr(\rho\log\sigma),
$$
and it is jointly convex with respect to its arguments~\cite{Lieb,Holevo}
\begin{equation*}
\begin{split}
	D\Big(p\rho_1+(1-p)\rho_2\,\Big\|\,p\sigma_1+(1-p)\sigma_2\Big) 
	\leq \\
	\leq
	pD(\rho_1\|\sigma_1)+(1-p)D(\rho_2\|\sigma_2)
\end{split}
\end{equation*}
for any states $\rho_{1,2}$, $\sigma_{1,2}$ and $0\leq p\leq1$. 

Then, the statement of Theorem~\ref{ThDecoy} is a simple consequence of the fact that the state $\rho^z_{\mu u}$ is a mixture of states with a certain number of photons (\ref{EqRhoPhRnd}),
linearity of $\Upsilon$, and joint convexity of the quantum relative entropy. 
Indeed,
\begin{equation*}
\rho_{\mu u}^z=
	P_0\rho^z_{\mu u}P_0+
	P_1\rho^z_{\mu u}P_1+P_{\geq2}\rho^z_{\mu u}P_{\geq2},
\end{equation*}
where $P_{\geq2}=\sum_{j=2}^\infty P_j$. 
Therefore, in view of the linearity of $\Upsilon$,
\begin{equation*}
\begin{split}
	\rho_{\overline ABE}&=
	(Q^{{\rm s}z})^{-1}
	({\rm Id}_{\overline A}\otimes\Upsilon)(\rho_{\overline AA})=
	\\
	&=
	\frac{Q^{{\rm s}z}_0}{Q^{{\rm s}z}}\,\rho_{\overline ABE}^{(0)}+
	\frac{Q^{{\rm s}z}_1}{Q^{{\rm s}z}}\,\rho_{\overline ABE}^{(1)}+
	\frac{Q^{{\rm s}z}_{\geq2}}{Q^{{\rm s}z}}\,\rho_{\overline ABE}^{(\geq2)},
\end{split}
\end{equation*}
where
\begin{align*}
	\rho_{\overline{A}BE}^{(\geq2)}&=
	(Q^{{\rm s}z}_{\geq2})^{-1}
	\,({\rm Id}_{\overline A}\otimes\Upsilon)
	(\widetilde\rho^{(\geq2)}_{\overline AA}), \\
	Q^{{\rm s}z}_{\geq2}&=\Tr\,({\rm Id}_{\overline A}\otimes\Upsilon)
	(\widetilde\rho^{(\geq2)}_{\overline AA}),\\
	\widetilde\rho^{(\geq2)}_{\overline AA}&=\frac12\sum_{u=0}^1
	\ket u_{\overline A}\bra u\otimes P_{\geq2}\rho^z_{\mu u}P_{\geq2}.
\end{align*}
Then, due to the joint convexity of the quantum relative entropy,
\begin{equation*}
\begin{split}
	H(\overline A|E)&\geq
	\frac{Q^{{\rm s}z}_0}{Q^{{\rm s}z}}H(\overline A|E)_{(0)}+ \\
	&+
	\frac{Q^{{\rm s}z}_1}{Q^{{\rm s}z}}H(\overline A|E)_{(1)}+
	\frac{Q^{{\rm s}z}_{\geq2}}{Q^{{\rm s}z}}H(\overline A|E)_{(\geq2)} \\
	&\geq 
	\frac{Q^{{\rm s}z}_0}{Q^{{\rm s}z}}+
    \frac{Q^{{\rm s}z}_1}{Q^{{\rm s}z}}H(\overline A|E)_{(1)},
\end{split}
\end{equation*}
Q.E.D. 
Here conditional entropies $H(\overline A|E)_{(0)}$ and $H(\overline A|E)_{(\geq2)}$ are calculated for the states $\rho_{\overline{A}BE}^{(0)}$ and $\rho_{\overline{A}BE}^{(\geq2)}$, respectively. 
They are nonnegative, because these states are classical-quantum.
We have also used the fact here that $H(\overline A|E)_{(0)}=H(\overline A)=1$, which follows from the form of the states $\widetilde\rho^{(0)}_{\overline AA}$ and $\widetilde\rho^{(0)}_{\overline ABE}$. 
Informally, the vacuum contains no information about the sent bit.
\end{proof}

Theorem~\ref{ThDecoy} proves that the decoy state method is universal, i.e., 
secure against arbitrary attacks (for arbitrary maps $\Upsilon$), 
and not only against the photon-number splitting attack.
More precisely, Theorem~\ref{ThDecoy} claims the universality of the lower bound (\ref{EqReduc}), 
whereas the decoy state method provides lower  bounds for $Q^{{\rm s}z}_0$ and $Q^{{\rm s}z}_1$,
as well as bounds for quantities participating in the estimation of $H(\overline A|E)_{(1)}$.

Estimate (\ref{EqReduc}) 
becomes exact in the case of the photon-number splitting attack, since this attack is optimal.
Indeed, this attack provides knowledge of all the bits that are encoded in multiphoton pulses to the adversary,
which is the basic assumption of this estimate.

Is it always possible to perform a photon-splitting attack?
The removal of a photon from a multiphoton signal by the adversary reduces the registration probability of this signal, that is, increases the level of losses.
As mentioned above, the adversary can compensate for this by reducing the level of natural losses in the communication channel or even in the optical scheme of the receiving side.
However, if the natural losses are so small that even their reduction to zero does not compensate for the losses introduced due to the removal of photons from multiphoton signals,
then the adversary cannot perform a full photon-number splitting attack.
Therefore, estimate~(\ref{EqReduc}) ceases to be precise (but, of course, remains true):
$H(\overline A|E)_{(\geq2)}>0$. 
Nevertheless, at realistic losses, the adversary can perform a photon-number splitting attack, 
and estimate~(\ref{EqReduc}) is precise. 

Let us now pay attention to the first term of the right-hand side of Eq.~(\ref{EqReduc}). 
It is related to the registration of a vacuum pulse.
The registration of a vacuum pulse can occur due to the actions of the adversary, 
who can send their signal to the receiver.
It can also happen due to a dark count in the receiver's detectors.
However, if key distribution is possible, the fraction of dark counts in the total number of counts (i.e., the ratio  $Q^{{\rm s}z}_0/Q^{{\rm s}z}$) is small.
Key distribution is only possible with a relatively low error rate.
The well-known theoretical maximal tolerable error rate for the BB84 protocol is 11\% (under the assumption that the error rate is the same for both bases, see Remark~\ref{RemQber} below)~\cite{Mayers,ShorPreskill,Gisin2002}. 
In practice, the maximal tolerable error rate is even lower due to the presence of multiphoton pulses and due to statistical fluctuations.
Since a dark count is erroneous with a probability of 1/2, this means that the fraction of dark counts in the total number of counts on the receiver side does not exceed 6\%.
For this reason, instead of an accurate estimate (\ref{EqReduc}), a rougher estimate is often used:
\begin{equation}\label{EqReduc1}
	H(\overline A|E)\geq 
	\frac{Q^{{\rm s}z}_1}{Q^{{\rm s}z}}
	H(\overline A|E)_{(1)}.
\end{equation}
We emphasize that the fraction of dark counts affects the closeness of estimate~(\ref{EqReduc1}) to the exact estimate~(\ref{EqReduc}), but not on the validity of Eq.~(\ref{EqReduc1}), 
which follows directly from Eq.~(\ref{EqReduc}) and the positivity of values $Q^{{\rm s}z}_0$ and $Q^{{\rm s}z}$.
This means that Eq.~(\ref{EqReduc1}) allows secret key generation at any dark count rate,
but at a slightly lower speed than is allowed by formula~(\ref{EqReduc}).

In what follows, in the decoy state method, dark counts are strictly taken into account in the value $Y_0$, see below Eqs.~(\ref{EqQiv}), (\ref{EqQv}), and (\ref{EqY0L}). 
At realistic communication channel lengths and without eavesdropping, 
i.e., during the `normal' operation of the system,
the dark counts (rather than, for example, errors due to inaccurate alignment of the optical scheme) 
make the main contribution to the error rate.

\begin{remark}\label{RemPhRnd}
Here the requirement of phase randomization (\ref{EqRhoPhRnd}) is crucial.
As mentioned above, it can be achieved in two ways: At the hardware level, when a laser emits a pulse with a random phase,
independent of the phases of previous impulses, or using active randomization, i.e., with the use of a device connected to a random number generator for the random choice of the phases of outgoing pulses.
In the latter case, randomization according to a uniform or another continuous distribution is not possible: 
Only a discrete approximation is possible; therefore, it is necessary to take into account the corresponding correction~\cite{DiscrPhRnd}.
\end{remark}

The Devetak--Winter formula for the maximal secret key rate was initially obtained for the case of a so-called collective attack.
Keep in mind that attacks on quantum cryptography protocols are divided (by increasing degree of generality) into individual, collective, and coherent~\cite{Gisin2002}.
In a collective attack, it is assumed that the adversary attacks all the transmitted states in the same way, so that, after sending $n$ pulses, the state $\rho_{\overline ABE}^{\otimes n}$ is formed [see Eq.~(\ref{EqRhoABE})].
Then, the adversary performs a collective (i.e., of the general form) measurement over thier own parts.

In the most general case, the adversary performs an arbitrary quantum transformation over $n$ pulses, which is not necessarily a tensor power of the transformation over one pulse.
These attacks are called coherent attacks.
In the asymptotic case $n\to\infty $, which we restrict ourselves to in this paper, the consideration of coherent attacks is reduced to the consideration of collective attacks using the de Finetti quantum representation~\cite{Renner}.
Therefore, formally, the use of the Devetak--Winter formula is not a limitation.

However, the corrections associated with the finiteness of $n$, when using the de Finetti representation, turn out to be large enough for realistic $n$; therefore, in practice, the de Finetti representation is usually not used.
Let us explain why the consideration of the photon-number splitting attack does not limit the generality in this case either.

For $n$ pulses, instead of state (\ref{EqRhoAA}), we now have the state 
\begin{equation}
	\rho_{\overline{\mathbf A}\mathbf A}^z=
	\frac1{2^n}\sum_{\mathbf u\in\{0,1\}^n}
	\ket{\mathbf u}_{\overline A}\bra{\mathbf u}
	\otimes
	\rho^z_{\mu\mathbf u},
\end{equation}
where $\mathbf u=u_1\ldots u_n$ is a binary string, $\ket{\mathbf u}=\ket{u_1}\otimes\ldots\otimes\ket{ u_n}$,
\begin{equation*}
	\rho^z_{\mu \mathbf u}=\rho^z_{\mu u_1}\otimes\ldots\otimes\rho^z_{\mu u_n},
\end{equation*}
and bold marking of the registers $(\overline{\mathbf A}, \mathbf A)$ means that registers are ``vectors''. 

Since
\begin{equation*}
	\rho^z_{\mu u}=\sum_{j=0}^\infty
	P_j\rho^z_{\mu u}P_j,
\end{equation*}
it can be stated that
\begin{equation}\label{EqManySums}
	\rho^z_{\mu\mathbf u}=
	\sum_{j_1=0}^\infty\ldots\sum_{j_n=0}^\infty
	(P_{j_1}\rho^z_{\mu u_1}P_{j_1})
	\otimes\ldots\otimes
	(P_{j_n}\rho^z_{\mu u_n}P_{j_n}).
\end{equation}

Each term on the right-hand side of Eq.~(\ref{EqManySums}) represents a state (up to normalization) with a certain number of photons in each position.
Since transformations of quantum states are linear (or affine) maps,
any state transformation (\ref{EqManySums}) is a convex sum of the results of state transformations with a certain number of photons in each pulse.
Therefore, the consideration of only such states does not limit the generality.
Measurements of the number of photons in each pulse in a photon-number splitting attack do not change state~(\ref{EqManySums}) 
(since it already has a diagonal form with respect to the operators $P_{j_1}\otimes\ldots\otimes P_{j_n}$), 
but simply allow the adversary to find out the number of photons in each pulse.
In this sense, the photon-number splitting attack is optimal: 
If each pulse has a certain number of photons and one can find out this number without introducing noise,
then it is optimal to do this, that is, to measure the number of photons in each pulse.

A rigorous estimate of the entropy characterizing the ignorance of the adversary for a state of 
form~(\ref{EqManySums}) for a finite $n$ was performed in Ref.~\cite{ConciseDecoy}.
The purpose of this consideration was to justify the fact that the consideration of states with a certain number of photons in each pulse does not limit the generality.

\section{Formulation in terms of entangled states}\label{sec:SecEntroAccum}

For the generalization of the Devetak--Winter results for the case of coherent attacks, one can use the entropy accumulation technique~\cite{EntroAccum}.
It consists of estimating the smoothed Renyi min-entropy for a state in $(\mathscr H_A\otimes\mathscr H_B\otimes\mathscr H_E)^{\otimes n}$
(which is necessary to estimate the ignorance of the adversary during a coherent attack) through the von Neumann entropy, which is the first term on the right-hand side of the Devetak--Winter formula~(\ref{EqDW}).
Therefore, if it is possible to apply this technique, 
the Devetak--Winter formula gives the maximally achievable  secret key rate  in the case of coherent 
(i.e., general) attacks.
Also, the entropy accumulation technique makes it possible to take into account the effects associated with the finite size of the statistical sample (finite number $n$),
which is outside the scope of this work: 
Recall that we are working in the limit of infinitely large $n$.

In Ref.~\cite{EntroAccum}, this technique was used to prove the Devetak--Winter formula for the BB84 protocol with a single-photon source in the case of general attacks.
However, it uses an equivalent entangled state representation of the BB84 protocol.
Therefore, the purpose of this section is to formulate the BB84 protocol with phase-randomized coherent states in terms of an entangled state.

Recall that $\mathscr H_A$ denotes the Hilbert space, corresponding to a quantum information carrier: $\mathbb C^2$ in the single-photon case and $\mathscr F(\mathbb C^2)$ in the general case.
$\overline A$ denotes a binary register which stores the sender's bit value. 
Let us introduce yet another Hilbert space $\mathscr H_{\widetilde A}$, which is related to the sender.
In the BB84 protocol, the sender transmits the states $\rho_{\mu u}^b\in\mathfrak T(\mathscr H_A)$, $b\in\{z,x\}$, $u\in\{0,1\}$, of the from of Eq.~(\ref{EqRhoPhRnd}) with the probabilities $p_b/2$. 
To define the protocol in terms of entangled states, let us assume
that an entangled state $\rho\in\mathfrak T(\mathscr H_{\widetilde A}\otimes\mathscr H_A)$ of the composite system $\mathscr H_{\widetilde A}\otimes\mathscr H_A$ is generated on the sender side.
The sender measures subsystem $\widetilde A$, whereas subsystem $A$, as previously, is sent to the receiver via the communication channel.
Let the sender's observable of subsystem $\widetilde A$ be given by the probability projector-valued measure $\Pi=\{\Pi_{ub}\}_{u\in\{0,1\},b\in\{z,x\}}$ in the space $\mathscr H_{\widetilde A}$. 
We require the fulfilment of the following conditions
\begin{equation}
	\Tr_{\widetilde A}(\Pi_{ub}\rho\Pi_{ub})=\frac{p_b}2\rho_{\mu u}^b,
\end{equation}
for all $u$ and $b$, where $\Tr_{\widetilde A}$ is the partial trace in the space $\mathscr H_{\widetilde A}$. 
Then, the sender can prepare the states $\rho_{\mu u}^b$ with the corresponding probabilities by measuring the observable $\Pi$ of the subsystem $\widetilde A$, 
so that we obtain a mathematically equivalent scheme of the protocol. 
For its security, the state $\rho$ must be entangled; 
the title of this scheme of the protocol comes from this fact.
If, in reality, 
an entangled state is not generated and the sender does not measure one of the subsystems of an entangled state,
then the system $\mathscr H_{\widetilde A}$ and the measurement in it are  ``fictitious'', so they are used only for a mathematically equivalent formulation.

In the single-photon case (when $\mathscr H_A=\mathbb C^2$ and the states $\rho_{\mu u}^b$ contain the single-photon component only), one can take $\mathscr H_{\widetilde A}=\mathbb C^2$, $\rho=\ket\Phi\bra\Phi$,
where 
\begin{equation}\label{EqEPR}
\begin{split}
	&\ket{\Phi}=\frac1{\sqrt2}(\ket{1,0}_z\otimes\ket{1,0}_z+\ket{0,1}_z\otimes\ket{0,1}_z)
	= \\
	&=\frac1{\sqrt2}(\ket{1,0}_x\otimes\ket{1,0}_x+\ket{0,1}_x\otimes\ket{0,1}_x)
	\in\mathbb C^2\otimes\mathbb C^2,
\end{split}
\end{equation}
and $\Pi_{ub}=p_b\ket u_b\bra u$. 
The security of this protocol, which is based on the entangled state $\ket\Phi$, is proved in Ref.~\cite{EntroAccum} using the entropy accumulation technique.
In this case, in the result of the measurement of the observable $\Pi$, $\widetilde A$ 
stores the sender's bit value. 
So, register $\widetilde A$ can be identified with $\overline A$. 
However, generally speaking, they are not identical.

In order to extend the results of Ref.~\cite{EntroAccum} for the multiphoton case, 
one needs to present the protocol scheme in terms of an entangled state in such a way
that the projection onto the single-photon subspace followed by the normalization again gives state~(\ref{EqEPR}) and observable $\{p_b\ket u_b\bra u\}$. 
State (\ref{EqEPR}) cannot be directly generalized to the multiphoton case, since
\begin{equation*}
\begin{split}
	\frac1{\sqrt2}(\ket{j,0}_z\otimes\ket{j,0}_z+
	\ket{0,j}_z\otimes\ket{0,j}_z)
	\neq \\
	\neq\frac1{\sqrt2}(\ket{j,0}_x\otimes\ket{j,0}_x+
	\ket{0,j}_x\otimes\ket{0,j}_x)
\end{split}
\end{equation*}
if $j\geq2$.

Let us extend the space  $\mathscr H_{\widetilde A}$: 
Let $\mathscr H_{\widetilde A}=\mathbb C^2\oplus\mathbb C^4$. 
We will still use $\{\ket0_z,\ket1_z\}$ and $\{\ket0_x,\ket1_x\}$ as the bases for the space $\mathbb C^2$: 
Each element of the basis marks the sender's bit.
Take new vectors $\{\ket{0z},\ket{1z},\ket{0x},\ket{1x}\}$ as the basis for the space $\mathbb C^4$. 
Each vector marks the bit and the basis (in $\mathbb C^2$), which is used by the sender. 
An equivalent entanglement-based representation of the protocol can be constructed using the states 
\begin{equation}\label{EqEPRmulti}
\begin{split}
	\rho=
	\mu e^{-\mu}\ket\Phi\bra\Phi
	+
	\sum_{u\in\{0,1\}}\sum_{b\in\{z,x\}}
	p_be^{-\mu}\ket{ub}\bra{ub}\otimes\\
	\otimes
	\left(
	\ket{\rm vac}\bra{\rm vac}
	+
	\sum_{j=2}^\infty
	\frac{\mu^j}{j!}
	\ket{\psi^b_{ju}}\bra{\psi^b_{ju}}
	\right).
\end{split}
\end{equation}
and the observable $\Pi$,
where $\Pi_{ub}=p_b\ket u_b\bra u+\ket{ub}\bra{ub}$. 
The projection onto the single-photon component [see (\ref{EqQND})] followed by the normalization to unity turns the state $\rho$ into the projector on $\ket\Phi$.  
Then, only the second term in expression $\Pi_{ub}$ remains relevant, 
so this reproduces the previous observable.
In state (\ref{EqEPRmulti}), entanglement takes place only in the single-photon component. 
There is no need to build a state in which entanglement is also contained in multiphoton pulses, since multiphoton pulses are considered insecure in any case, and there is no need to make them secure.

Thus, a combination of Theorem~\ref{ThDecoy}, which reduces the estimation of adversary's ignorance in terms of the von Neumann entropy to the estimation of the corresponding entropy for states with single-photon pulses,
and the estimate of the latter in Ref.~\cite{EntroAccum} allows using the entropy accumulation technique for the BB84 protocol 
with the source of coherent states concluding that the Devetak--Winter formula gives the maximally achievable key rate without assumptions about the class of adversary attacks.

\section{Decoy state method}\label{sec:SecDecoy}

As follows from Eq.~(\ref{EqReduc1}), the problem of obtaining achievable key rates \ref{EqDW}) in the case of arbitrary collective attacks in the space $\mathscr F(\mathbb C^2)$ splits into two subproblems: 
The estimation of the factor $Q_1^{{\rm s}z}/Q^{{\rm s}z}$, i.e., the fraction of position in the sifted key obtained from the single-photon pulses,
and the estimation of the adversary's ignorance $H(\overline A|E)_{(1)}$ about a single bit of this part of the key.

The expression for $H(\overline A|E)_{(1)}$ is well known~\cite{ShorPreskill,Koashi,EntroAccum} and has the form 
\begin{equation}\label{EqHAE1}
	H(\overline A|E)_{(1)}=1-h(e^x_1),
\end{equation}
where $e^x_1$ is the probability of bit error in the single-photon states conditioned on the use of the $x$ basis by both parties.
Therefore, the expression for the maximally achievable key rate takes the form 
\begin{equation}\label{EqRate_GLLP}
	R=\frac{Q_1^{{\rm s}z}}{Q^{{\rm s}z}}[1-h(e^x_1)]-h(E^{{\rm s}z}).
\end{equation}
\begin{remark}\label{RemQber}
Keep in mind that, in the described variant of the BB84 protocol, the basis $x$ is rarely used.
But the probability $e^x_1$ is estimated only for those few positions in which both sides have chosen the $x$ basis.
Using Eq.~(\ref{EqRate_GLLP}), 
let us comment on the threshold error rate at which key distribution is possible, i.e. $R>0$. 
For simplicity, we assume that all the states are single photon, i.e., $Q_1^{{\rm s}z}=Q^{{\rm s}z}$. 
If the error rate does not depend on the basis, i.e. $e^x_1=E^{{\rm s}z}=e$, then the key distribution is possible if $1-2h(e)>0$, i.e. if  $e<e_{\rm crit}\approx0.11$. 
However, in a theoretically possible situation of various $e^x_1$ and $E^{{\rm s}z}$, the statement on the critical error rate of 11\%, generally speaking, is not true anymore.
For example, if $E^{{\rm s}z}=0$, then the critical error rate for $e^x_1$ increases to 50\%.
By analogy, if $e^x_1=0$, then the critical error rate for $E^{{\rm s}z}$ increases to 50\% as well.
\end{remark}

The decoy state method allows estimating the factor $Q_1^{{\rm s}z}/Q^{{\rm s}z}$ as well as the value of $e^x_1$ in an efficient manner. 
The general idea is that, in some positions randomly selected and not known in advance, the sender transmits not the signal pulses (i.e., used to form a key) with an intensity $\mu$  [see Eq.~(\ref{EqCohState})], 
but decoy pulses with lower intensities. 
The positions in which the decoy pulses are used do not participate in the key formation.
After the end of the transfer of quantum states, at the stage of announcements, the sender announces the intensities used in each message.
Based on this information, the legitimate parties calculate the statistics for each intensity and then compare the results for states of different intensities.
Informally speaking, the idea of the method is that the adversary measuring the number of photons does not know the intensity of a given pulse. 
Therefore, the adversary does not have the ability to deal differently with the signal and decoy states with the same number of photons (which, on the contrary, the adversary is supposed to know).
If the adversary blocks all single-photon components, they block almost all low-intensity decoy states that would be noticeable on the receiver side.
Formally, this leads to a zero key rate, which corresponds to the detection of eavesdropping.

The most common decoy state method, which we will describe, is based on one signal and two decoy states.
We use notations that are similar to the ones from Refs.~\cite{MaLo2005} and~\cite{Ma2017}. 
We also assume that the efficiencies and dark count probabilities of the detectors are the same.
In the opposite case, Eq.~(\ref{EqHAE1}) should be corrected; see Ref.~\cite{Bochkov2019}, 
in which an adaptation of the decoy state method for the case of detection-efficiency mismatch 
is discussed in detail.
If the efficiencies and dark count probabilities are identical, index $z$ of the values $Q_1^{{\rm s}z}$ and $Q^{{\rm s}z}$ can be removed: Detection probabilities do not depend on the basis.

Let $Y_i$ be the probability that one of the detectors  on the receiving side clicks provided that, on the sender's side, the transmitted state contained $i$ photons.
This probability does not depend on the intensity of the pulse (it depends only on the number of photons in it, i.e. $i$), but it can depend both on the attenuation in the channel and on the actions of the adversary,
which can block some of the states or, on the contrary, send them to the receiving side without attenuation.
Note that $Y_0$ is nonzero and is equal to the probability of the dark count in at least one of the detectors.

$Q_i^{v}$ is the probability that the transmitted state contains $i$ photons, 
and one of the detectors on the receiving side clicks provided that the sender used a type $v\in\{{\rm s}, {\rm d}_1, {\rm d}_2\}$ state (signal or one of two decoys).
Let us denote the intensity of the signal pulse $\mu_{\rm s}=\mu$. 
The intensities of decoy pulses are $\mu_{{\rm d}_1}=\nu_1$, $\mu_{{\rm d}_1}=\nu_2$, and we require that
\begin{equation}\label{EqCondInt}
	0\leq\nu_2<\nu_1,\quad \nu_1+\nu_2<\mu.
\end{equation}
Since the probability that the pulse contains exactly $i$ photons (under the condition of the pulse intensity $\mu$), equal to $e^{-\mu}\mu^i/i!$, then
\begin{equation}\label{EqQiv}
	Q_i^{v}=e^{-\mu_v}\frac{\mu_v^i}{i!}Y_i.
\end{equation}
The probability of a click in at least one of the detectors on the receiving side provided that the pulse is of a $v$ type is
\begin{equation}\label{EqQv}
	Q^v=\sum_{i=0}^\infty Q_i^v=\sum_{i=0}^\infty e^{-\mu_v}\frac{\mu_v^i}{i!}Y_i.
\end{equation}

Quantities $Q^v$ become known to the legitimate parties at the stage of announcements (more precisely, their estimates, but in the limit of an infinite number of messages, they coincide with the true values).
They can be used to estimate the unknown quantity $Y_1$ and, consequently, quantity $Q_1^{\rm s}$ that appears in Eq.~(\ref{EqReduc1}).

Frst, from the chain of inequalities 
\begin{equation*}
\begin{split}
	&\nu_1 Q^{{\rm d}_2}e^{\nu_2}-\nu_2 Q^{{\rm d}_1}e^{\nu_1} = \\
	&(\nu_1 - \nu_2)Y_0 - \nu_1\nu_2\sum_{i = 2}^{\infty}  (\nu_1^{i - 1} - \nu_2^{i - 1})\frac{Y_i}{i!} \leq (\nu_1 - \nu_2)Y_0,
\end{split}
\end{equation*}
we obtain the lower bound for $Y_0$:
\begin{equation}\label{EqY0L}
	Y_0 \geq Y_0^{\rm L}=\max
	\left\lbrace
	\frac{\nu_1 Q^{{\rm d}_2}e^{\nu_2}-\nu_2 Q^{{\rm d}_1}e^{\nu_1}}{\nu_1-\nu_2},0
	\right\rbrace.
\end{equation}
This inequality gives an estimate pf the probability of dark counts in at least one of the detectors of the receiving side,
which is supposed to be unknown to legitimate users in advance and may be controlled by the adversary.

The estimate for $Y_1$ comes from the chain of inequalities 
\begin{equation}
\begin{split}
	&Q^{{\rm d}_1} e^{\nu_1} - Q^{{\rm d}_2} e^{\nu_2} 
	= (\nu_1 - \nu_2)Y_1 + \sum_{i = 2}^{\infty}(\nu_1^i - \nu_2^i)\frac{Y_i}{i!} \leq\\
	&\leq (\nu_1 - \nu_2)Y_1 + \frac{\nu_1^2-\nu_2^2}{\mu^2}\sum_{i = 2}^{\infty}\frac{\mu^i}{i!}Y_i = \\
	&= (\nu_1 - \nu_2)Y_1 + \frac{\nu_1^2-\nu_2^2}{\mu^2}(Q^{\rm s} e^{\mu} - Y_0 - Y_1\mu) \leq \\
	&\leq (\nu_1 - \nu_2)Y_1 + \frac{\nu_1^2-\nu_2^2}{\mu^2}(Q^{\rm s} e^{\mu} - Y_0^{\rm L} - Y_1\mu).
\end{split}
\end{equation}
The first inequality of this chain uses conditions~(\ref{EqCondInt}) as well as the inequality $a^i - b^i \leq a^2 - b^2$ if $0 < a + b < 1$ and $i > 2$.  
Hence,
\begin{equation}\label{EqY1L}
\begin{split}
	Y_1 &\geq Y_1^{\rm L}=\frac{\mu}{\mu(\nu_1-\nu_2)-(\nu_1^2-\nu_2^2)}\times \\
	\times
	&\left[
	Q^{{\rm d}_1}e^{\nu_1}-Q^{{\rm d}_2}e^{\nu_2}
	-\frac{\nu_1^2-\nu_2^2}{\mu^2}(Q^{\rm s}e^\mu-Y_0^{\rm L})
	\right].
\end{split}
\end{equation}
and
\begin{equation}\label{EqQ1L}
	Q_1^{\rm s} \geq Q_1^{{\rm sL}}=\mu e^{-\mu}Y_1^{\rm L}.
\end{equation}

Then, we denote by $e^x_i$ the probability of bit error in the $i$-photon state provided that both parties used the $x$ basis
(not to be confused with the base of natural logarithm $e$ and the exponents $e^{\nu_1}$ and $e^{\nu_2}$ below). 
Let us estimate the probability of $e^x_1$ participating in Eq.~(\ref{EqHAE1}). 
This probability can include both imperfections of the devices and eavesdropping.
The error probability $E^{vx}$, provided that the pulse is of a $v$ type and the $x$ basis is used by both parties, is
\begin{equation*}
	E^{vx}Q^{v}=\sum_{i=0}^\infty e^x_i Y_i\frac{\mu_v^i}{i!}e^{-\mu}.
\end{equation*}
Then, from the inequality 
\begin{equation*}
\begin{split}
	&E^{{\rm d}_1x}Q^{{\rm d}_1}e^{\nu_1} - 
	E^{{\rm d}_2x}Q^{{\rm d}_2}e^{\nu_2} = \\
	&=
	e_1^x(\nu_1-\nu_2)Y_1 + \sum_{i = 2}^{\infty}e^x_i(\nu_1^i - \nu_2^i)\frac{Y_i}{i!} \geq\\
	&\geq e^x_1(\nu_1-\nu_2)Y_1 \geq e^x_1(\nu_1-\nu_2)Y_1^{\rm L},
\end{split}
\end{equation*}
we obtain an upper bound on $e^x_1$:
\begin{equation}\label{Eqe1U}
	e_1^x \leq e_1^{x\rm U}=\frac{E^{{\rm d}_1x}Q^{{\rm d}_1}e^{\nu_1} - E^{{\rm d}_2x}Q^{{\rm d}_2}e^{\nu_2}}{(\nu_1-\nu_2)Y_1^{\rm L}}.
\end{equation}

The substitution of the obtained estimates into Eqs.~(\ref{EqHAE1}) and (\ref{EqReduc1}) yields the formula for the achievable secret key rate:
\begin{equation}\label{EqRate}
	R=\frac{Q^{\rm sL}_1}{Q^{\rm s}}[1-h(e^{x\rm U}_1)]-h(E^{{\rm s}z}),
\end{equation}

In the absence of eavesdropping, 
the exact values of all quantities that participate in the definition of the key rate take the form 
[see Eq.~(5)--(11) in Ref.~\cite{MaLo2005}]
\begin{align}
	&Q^{\rm s}=p_d+1-e^{-\eta\mu T(L)},\label{EqQmu}\\
	&Y_1=p_d+\eta T(L),\label{EqY1}\\
	&Q_1^{\rm s}=[p_d+\eta T(L)]\mu e^{-\mu}\label{EqQ1},\\
	& E^{{\rm s}z}=p_d/2\label{EqErrMu},\\
	&e^x_1=\frac{p_d}{2Y_1},\label{Eqe1}
\end{align}
where $p_d$ is the probability of a dark count of at least one of the detectors,
$T(L)=10^{-\delta L/10}$ is the transmission coefficient in a channel of length $L$, 
$\delta$ is the specific loss factor,
$\eta$ is the quantum efficiency each of single-photon detector.
We assume that errors occur only due to dark counts, and the optical part is perfectly tuned.

It is possible to consider the dependence of the key length on the length of the communication line with realistic parameters on the receiving side: 
The intensity of the signal and decoy states $\mu=0.5$, $\nu_1=0.01$, $\nu_2=0.001$,  $p_d=10^{-6}$, $\eta=0.1$, $\delta=0.2$ dB/km. 
In Fig.~\ref{fig1}, the plot of $R$, which is calculated using Eqs.~(\ref{EqRate}) and (\ref{EqRate_GLLP}), as a function of the communication channel length is presented. 
In both cases, the observed statistics of $Q^{\mu}$, $Q^{\nu_1}$, $Q^{\nu_2}$, $E^{\mu}$, $E^{\nu_1}$, and $E^{\nu_2}$ are calculated using Eqs.~(\ref{EqQmu}) and (\ref{EqErrMu}), i.e. for the case of the absence of eavesdropping.

\begin{figure}
	\centering
	\includegraphics[width=1\linewidth]{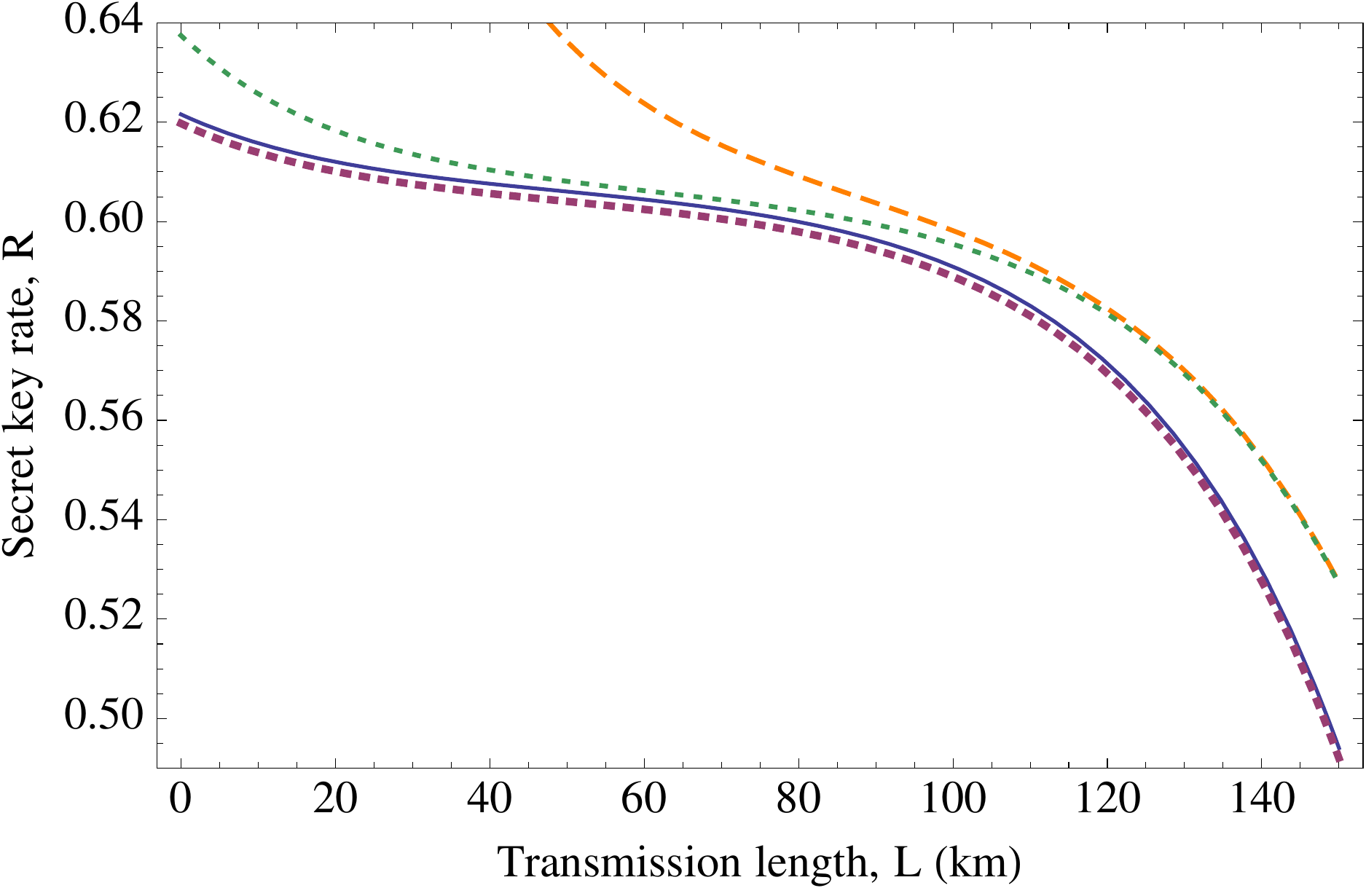}
	\caption{Secret key rate depending on the communication channel length.
	Solid blue line: Calculation for a photon-number splitting attack with a coherent attack on one-photon states by Eq.~(\ref{EqRate_GLLP}). 
	Solid dotted purple line: Calculation for the same attack using Eq.~(\ref{EqRate}). 
	In this formula, instead of, generally speaking, unknown values of $Q_1^\mu$ and $e_1$, their lower and upper estimations are used, respectively. 
	(formulae (\ref{EqQ1L}) and (\ref{Eqe1U}) with formulae (\ref{EqQmu}) and (\ref{EqErrMu}) for values of the observed statistics of registrations. 
	Thin dotted green line: Calculation for a beam splitting attack [Eq.~(\ref{EqRateBS}) with $t=\eta T(L)$].
	Thin dashed orange line: Calculation for a beam splitting attack, in which the adversary does not change the efficiency of detectors on the receiver side [Eq.~(\ref{EqRateBS}) with $t=T(L)$].
	Parameters: Intensities of signal and decoy states are $\mu=0.5$, $\nu_1=0.01$, $\nu_2=0.001$, dark count probability is $p_d=10^{-6}$,
	quantum efficiency of each single-photon detector is $\eta=0.1$,
	specific loss factor in the communication channel is $\delta=0.2$ dB/km.}
\label{fig1}
\end{figure}

In Ref.~\cite{MaLo2005} it is shown that at $\nu_1,\nu_2\to0$,
and without eavesdropping 
(or when the eavesdropping does not change the registration statistics), 
the estimates $Q^{\mu,\rm L}_1$ and $e_1^{\rm U}$ become exact, 
i.e. they tend to the values of $Q_1^\mu$ and $e_1$, 
which are defined by formulae (\ref{EqQ1}) and (\ref{Eqe1}). 
Then, in the asymptotic case, the key rate is defined by Eq.~(\ref{EqRate_GLLP}), 
where $Q^{\mu}$, $Q_1^\mu$, and $e_1$ are defined by formulae~(\ref{EqQmu})--(\ref{Eqe1}).

\begin{remark}
It is important to note that estimates~(\ref{EqY1L})--(\ref{Eqe1U}) do not rely on knowledge of the transmission coefficient of the channel $T(L)$,
dark count probability $p_d$, detector efficiencies $\eta$, or other parameters of the optical and detection system of the receiving side.
In the derivation of these estimates, it is not assumed that these coefficients are the same for all pulses.
Thus, in fact, the decoy state method provides security even when the adversary can change these parameters. 
This is explained in detail, for example, in Ref.~\cite{Bochkov2019}.
Thus, we assume that the adversary can reduce natural losses and natural errors in the communication channel and even in the detectors up to creating a communication line and detectors without losses and errors.
This gives the adversary certain freedom in creating the desired statistics of registration of states with different intensities on the receiving side.
\end{remark}

\section{Beam splitting attack}\label{sec:SecBS}

As was noted in Sec.~\ref{sec:SecReduc}, at the practical level of losses in the channel, estimates~(\ref{EqReduc}) and, therefore, (\ref{EqRate_GLLP}) become exact 
when the adversary performs a photon-number splitting attack. 
However, this attack has not yet been implemented in practice.
Let us compare this attack with the simpler beam splitting attack, which is also often considered in quantum cryptography~\cite{Gisin2002,BB92,Dusek,Felix}. 

A simple version of the attack consists of the fact that the adversary, with the use of a beam splitter, takes away the part of the signal (a coherent state) that is absorbed in the channel as a result of natural losses.
The adversary transmits the rest of the signal to the receiver via an ideal channel, i.e., a lossless channel.
As a result, this attack does not change the statistics of the receiver state registrations 
in comparison with the natural conditions (in the absence of an adversary).

From the formal point of view, the attack is described by the isometry
$V_{\rm BS}:\mathscr F(\mathbb C^2)\to\mathscr F(\mathbb C^2)\otimes\mathscr F(\mathbb C^2)$, 
for the determination of which it is enough to specify its action on all possible coherent states~(\ref{EqCohState}):
\begin{equation}\label{EqBS}
	V_{\rm BS}\ket{\alpha,u}_b=\ket{\sqrt t\,\alpha,u}_b\otimes 
	\ket{\sqrt{1-t}\,\alpha,u}_b,
\end{equation}
where $t$ and $1-t$ are transmittance and refractive indices, respectively.
The first space in the tensor product of $V_{\rm BS}$ output is interpreted as the receiver's space,
while the second is the adversary's space.
The composition of the map
\begin{equation}\label{EqBSTrans}
	\rho^z_{\mu u}\mapsto V_{\rm BS}\,\rho^z_{\mu u}V_{\rm BS}^\dag,
\end{equation}
describing the eavesdropping, with a map describing the measurement on the receiver side gives channel $\Upsilon_0$ (see Sec.~\ref{sec:SecReduc}), which corresponds to this attack.
In any case, since, after map (\ref{EqBSTrans}), the state of the adversary does not undergo changes anymore, 
map (\ref{EqBSTrans}) is enough to determine the state $\rho_{\overline AE}$ and,
therefore, define entropy $H(\overline A|E)$ in Eq.~(\ref{EqDW}):
\begin{equation*}
\begin{split}
	&\rho_{\overline AE}=\frac12\sum_{u=0}^1\ket u_{\overline A}\bra u
	\otimes\Tr_B(V_{\rm BS}\,\rho^z_{\mu u}V_{\rm BS}^\dag)
	=\\
	&=\frac{e^{-\mu_E}}2
	\sum_{u=0}^1\ket u_{\overline A}\bra u \otimes \\
	&\otimes
	\left[
	\ket{\rm vac}\bra{\rm vac}
	+\sum_{j=1}^\infty
	\frac{\mu_E^j}{j!}
	\ket{\psi_{j\mu_E}^z}_E\bra{\psi_{j\mu_E}^z}
	\right]
	=\\
	&=
	\frac{e^{-\mu_E}}2I_{\overline{A}}\otimes
	\ket{\rm vac}\bra{\rm vac}+
	\frac{e^{-\mu_E}}2
	\sum_{u=0}^1\ket u_{\overline A}\bra u\otimes \\
	&\otimes
	\sum_{j=1}^\infty
	\frac{\mu_E^j}{j!}
	\ket{\psi_{j\mu_E}^z}_E\bra{\psi_{j\mu_E}^z},
\end{split}
\end{equation*}
where $\mu_E=(1-t)\mu$, $u\in\{0,1\}$. 
We see that the adversary has zero information about the key bit (accordingly, the ignorance is equal to one) if the vacuum component is actualized (with probability $e^{-\mu_E}$). 
Otherwise, the adversary has complete information about the key bit.
Since state~(\ref{EqBS}) has the form of a tensor product, the realization of the adversary's vacuum component does not depend on the photon registration event on the receiver's side.
Therefore, the probability of the vacuum component being conditioned on a click of a receiver's detector is equal to the unconditional probability of this event.
Therefore, $H(\overline A|E)=e^{-(1-t)\mu}$ and
\begin{equation}\label{EqRateBS}
	R_{\rm BS}=e^{-(1-t)\mu}-h(E^{{\rm s}z}).
\end{equation}

In the absence of eavesdropping, the actual transmission coefficient of the entire optical system 
(communication line and detectors) is $\eta T(L)$.
If we assume that the portion of the signal absorbed in the channel and on the detectors goes entirely to the adversary, then $t=\eta T(L)$ should be set.
This is usually explained as follows: 
We assume that the adversary can replace the communication line and detectors with ideal ones (i.e., without losses) and, due to this, take away a share of the signal $1-\eta T(L)$ 
reproducing the natural level of losses.
If we assume that the adversary can replace the communication line with an ideal one, but has no power over the efficiency of the detectors $\eta$, then we can put $t=T(L)$, 
which reduces the adversary's information about the key and allows the legitimate parties 
to generate the key at a higher rate, under this assumption about the adversary.

In Review~\cite{Gisin2002}, the feasibility of the assumption about the ability of the adversary to replace the channel with a noiseless one is discussed.
It is noted that all existing solutions have fundamental limitations on the minimal loss factor.
There are hypothetical solutions that are theoretically capable of providing a lossless channel, but this review concludes that they are unrealistic in any foreseeable future.
However, quantum cryptography often assumes unrealistically large adversary capabilities in order to fundamentally resolve the issue of secure key distribution.
So, for example, even if the adversary cannot create a lossless communication line, 
they can create a communication line with a lower level of losses. 
It can also affect the efficiency of detectors within certain limits.
In order not to discuss the issues of what these limits may be, one can make the weakest assumption about the complete control of the adversary over the transmission coefficient of the line and the effectiveness of the detectors.
But, let us recall, here we are considering the case of identical detector efficiencies, i.e., we are assuming that the adversary can only change the efficiency of all detectors by the same amount.

Plots of $R_{\rm BS}$ for $t=\eta T(L)$ and $t=T(L)$ are presented in Fig.~\ref{fig1}.
By comparing them with the calculated achievable rates (\ref{EqRate_GLLP}) and (\ref{EqRate}), one can conclude that $R_{\rm BS}$ is higher for the calculated achievable rates.
This was expected, since a beam splitting attack is a particular type of attack, while formulas (\ref{EqRate_GLLP}) and (\ref{EqRate}) are derived for the general case.
In view of the importance of the beam splitting attack, in the next section, we consider why the beam splitting attack is less effective than the photon-number splitting attack.

By comparing plots $R_{\rm BS}$ for $t=\eta T(L)$ and $t=T(L)$, we see that, 
if the length is small, so that the main contribution to the losses comes from the detectors, 
then the possibility of the adversary diverting a larger portion of the signal by increasing the efficiency of the detectors greatly reduces the achievable key rate and brings it closer to the achievable key rate for the photon-number splitting attack.
At communication line lengths from 50 km, the main contribution to the losses is no longer from the detectors, but from the communication channel; 
therefore, the curves for these two cases approach each other and practically coincide at lengths from 120 km.

Note that, in recent works~\cite{KK,KKFK,AKP,KNKF}, generalizations of the beam splitting attack for the B92,
COW, and DPS protocols are considered. 
In these generalizations, the adversary can change the intensity of states, and their further actions depend on
whether they have managed to extract information from the part of the state diverted by the beam splitter.

\section{Comparison of photon-number splitting and beam splitting attacks}\label{sec:SecProof}

Beam splitting and photon-number splitting attacks are very important in quantum cryptography, so, in this section, we explain why the first attack is less effective than the second one.
To do this, we first prove the inequality
\begin{equation}\label{EqIneq}
	R<R_{\rm BS},
\end{equation}
where $R$  is calculated using formula~(\ref{EqRate_GLLP}) and $R_{\rm BS}$ is calculated using formula (\ref{EqRateBS}) for $t=\eta T(L)$ (i.e. with the weakest assumption about the adversary).
If we replace inequality  (\ref{EqIneq}) with the nonstrict one, 
then it follows directly from formula~(\ref{EqRate_GLLP}), because formula~(\ref{EqRate_GLLP}) is derived for a general attack. 
Let us prove a strict inequality, i.e., that the beam splitting attack is always suboptimal. 
We also recall that the photon-number splitting attack is optimal, 
i.e., key rate (\ref{EqRate_GLLP}) is maximally achievable.

Let us prove (\ref{EqIneq}) for the case $\mu\leq1$. 
The case $\mu>1$ can also be considered, but in practice, $\mu\leq1$, so, for simplicity, we restrict ourselves to considering only this case.
We also assume that the quantity $\eta T(L)$ is strictly positive (otherwise, not only the key distribution, 
but also any communication is impossible).
Recall also that  $\eta T(L)\leq1$.

Since the term $h(E^{{\rm s}z})$ in both formulae~(\ref{EqRate_GLLP}) and (\ref{EqRate}) is the same, the proof of Ineq.~(\ref{EqIneq}) is reduced to the proof of the inequality
\begin{equation}\label{EqIneq00}
	\frac{Q_1^\mu}{Q^{\mu}}[1-h(e^x_1)]<e^{-(1-t)\mu}.
\end{equation}
Let us prove a stricter inequality,
\begin{equation}\label{EqIneq0}
	\frac{Q_1^\mu}{Q^{\mu}}<e^{-(1-t)\mu}.
\end{equation}
Using (\ref{EqQmu})--(\ref{EqQ1}), for the left-hand side, we have
\begin{equation}
	\frac{Q_1^\mu}{Q^{\mu}}=
	\frac{\mu e^{-\mu}(p_d+t)}{p_d+1-e^{-\mu t}}.
\end{equation}
Let us consider the fraction
\begin{equation}
	\frac{p_d+t}{p_d+1-e^{-\mu t}}.
\end{equation}
In view of the inequality $1-e^{-x}<x$ for $x>0$ and, therefore, $1-e^{-\mu t}<\mu t\leq t$, 
one can conclude that the fraction decreases with increasing $p_d$, so that
\begin{equation}
	\frac{\mu e^{-\mu}(p_d+t)}{p_d+1-e^{-\mu t}}
	\leq 
	\frac{\mu t e^{-\mu}}{1-e^{-\mu t}}.
\end{equation}
Thus, it is necessary to prove the inequality
\begin{equation}\label{EqIneq2}
	\frac{\mu t e^{-\mu}}{1-e^{-\mu t}}<e^{-\mu(1-t)},
\end{equation}
or 
\begin{equation}\label{EqIneq3}
	\frac{\mu t}{e^{\mu t}-1}<1.
\end{equation}
This inequality obviously follows from $e^{x}-1>x$ for $x>0$.
Ineq.~~(\ref{EqIneq0}) and, therefore, Ineq.~(\ref{EqIneq}) have been proved. 

Let us analyze the reasons why Ineq.~(\ref{EqIneq}) is strict, so that the beam splitting attack is not optimal.
First, in the proof of Ineq.~(\ref{EqIneq}) we have used the inequality $h(e_1^x)>0$ for $e_1^x>0$. 
This means that the beam splitting attack does not attack single-photon transmissions
at the cost of introducing noise, in contrast to the photon-number splitting attack.
This is the first reason why the beam splitting attack is non-optimal.

Second, for~(\ref{EqIneq0}) arises from the following effect.
Since the sent state is a mixture of Fock states, we can think about a certain number of photons in a pulse, even if this observable is not actually measured by anyone.
One can imagine another participant who, before the actions of the adversary, measures the number of photons in the pulse (without changing the state) and then observes the adversary's actions with the knowledge of the number of photons.
They then observe that, in the case of the photon-number splitting attack, 
the adversary recognizes the information encoded in multiphoton pulses with certainty,
while in the case of the beam splitting attack, with a certain probability, 
they can transmit all photons to the receiver without leaving a single photon for themselves or, 
conversely, they take all the photons for themselves
without passing a single one to the receiver. 
The last alternative leads to the no registration event (unless a dark count occurs) on the receiver's side, so this position is not included in the sifted key.
In this case, the fraction of multiphoton (insecure) states in the sifted key decreases. 
In contrast, in the photon-number splitting attack, only the single-photon transmissions are blocked.
Probabilistic processing of multiphoton pulses in the beam splitting attack, instead of optimal deterministic processing in the photon-number splitting attack, is the reason for Ineq.~(\ref{EqIneq0}). 
Moreover, if, after measuring the number of photons, 
the adversary performs a probabilistic processing of the multiphoton pulses, 
then they can reproduce (simulate) the beam splitting attack.

In Ref.~\cite{Molotkov2019}, it is noted that, in the beam splitting attack, the adversary takes away almost all quantum states.
However, only those positions in which the receiver also registers a photon participate in the key generation.
Therefore, the withdrawal of the adversary of almost all quantum states leads to losses but does not necessarily lead to a large amount of knowledge of the adversary about the sifted key.

From Fig.~\ref{fig1}, we see that, 
at the considered parameters and the length of the communication line up to approximately 140~km, 
the possibilities of these two attacks practically coincide:
The achievable secret key rate for the case of the beam splitting attack is only slightly higher than the achievable secret key rate for the case of the photon-number splitting attack.
At communication channel lengths over 140 km, the advantages of the photon number splitting attack become significant, and the curves diverge.
Over a length of approximately 200 km, the achievable rate in the case of the photon-number splitting attack drops to zero, while the secret key can still be generated in the case of the beam splitting attack.

From the above reasoning, it is possible once again (see the end of Sec.~\ref{sec:SecReduc}) to easily deduce the optimality of the photon-number splitting attack.
If it is possible to measure the number of photons without spoiling the state, it is optimal to do this. 
Next, one should take actions that are optimal for a given known number of photons.
Obviously, if the state is multiphoton, then it is optimal to remove one photon from it (more is possible, but one photon also gives complete information about the key bit).
If the transmission is single photon, it is optimal to carry out the optimal attack for the single-photon pulses at the cost of introducing noise.
Since the transmitted state is a mixture of Fock states and any quantum transformation is linear,
any possible attack can be simulated by measuring the number of photons and applying one probabilistic processing of a pulse or another depending on the number of photons in it.
In fact, Theorem~\ref{ThDecoy} is a formalization of these discussions. 

In concluding this section, let us mention the unambiguous state discrimination attack (USD attack). 
For the BB84 protocol, it is discussed in Ref.~\cite{Dusek}.\,It is emphasized that, in the case of coherent states, with a randomized phase, the photon-number splitting attack (without attacking the single-photon pulses) 
is an optimal discrimination of states with an unambiguous outcome.

\section{Polarization and phase encoding}\label{sec:SecPhaseCoding}

Some researchers cast doubts on the applicability of the decoy state method in the case of the widely used 
phase encoding instead of polarization encoding~\cite{KulikMol}.
To eliminate these doubts, in this section, we show that these two encoding methods are completely equivalent; 
therefore, the decoy state method is applicable to phase encoding in the same way as to the polarization one.

Note that the previous discussion did not rely on what kind of encoding is used, although we have referred to the polarization encoding, for example.

In the polarization encoding, creation operators $a^\dag_{z0}$, $a^\dag_{z1}$, $a^\dag_{x0}$, and $a^\dag_{x1}$ in Eqs.~(\ref{Eqj0j1z})--(\ref{Eqax}) can be, 
for example, photon creation operators with horizontal, vertical, diagonal, and antidiagonal polarizations: $a^\dag_{H}$, $a^\dag_{V}$, $a^\dag_{D}$, and $a^\dag_{A}$, respectively. 
The transmitted states can then be rewritten in the form
\begin{equation}\label{EqPolCoding}
\begin{split}
	\ket{\alpha,0}_z&=\ket{\alpha}_{z0}\ket{0}_{z1}\equiv\ket{\alpha}_{H}\ket{0}_{V},\\
	\ket{\alpha,1}_z&=\ket{0}_{z0}\ket{\alpha}_{z1}\equiv\ket{0}_{H}\ket{\alpha}_{V},\\
	\ket{\alpha,0}_x&=\ket{\alpha}_{x0}\ket{0}_{x1}\equiv\ket{\alpha}_{D}\ket{0}_{A}
	=
	\left|\frac\alpha{\sqrt2}\right\rangle_{H}
	\left|\frac\alpha{\sqrt2}\right\rangle_{V},\\
	\ket{\alpha,1}_x&=\ket{0}_{x0}\ket{\alpha}_{x1}\equiv\ket{0}_{D}\ket{\alpha}_{A}
	=
	\left|\frac\alpha{\sqrt2}\right\rangle_{H}
	\left|-\frac\alpha{\sqrt2}\right\rangle_{V},
\end{split}
\end{equation}
where $\ket\alpha=e^{-\mu/2}\sum_{j=1}^\infty(\alpha^j/\sqrt{j!})\ket j$ and $\ket j$, $j=0,1,\ldots$ are a coherent state and a state with a definite number of photons in the corresponding mode, respectively, $\alpha\in\mathbb C$, $\mu=|\alpha|^2$.

However, in the phase encoding, the employed states have the following form~\cite{Gisin2002,KulikMol}:
\begin{equation}\label{EqPhCoding}
	\left|(-1)^u \frac{e^{i\varphi_b}\alpha}{\sqrt2}
	\right\rangle_1
	\left|\frac{\alpha}{\sqrt2}\right\rangle_2,
\end{equation}
where $u\in\{0,1\}$ is the coding bit, $b\in\{z,x\}$ is the basis, $\varphi_z=0$, $\varphi_x=\pi/2$, and modes 1 and 2 correspond to two time windows. 
That is, each signal consists of two phase-matched pulses.
The phase in each pair of pulses is randomly selected, i.e. the adversary and the receiver `observe' state of the form 
\begin{equation}\label{EqPhCodingRND}
\begin{split}
	\frac1{2\pi}\int_0^{2\pi}d\theta\,
	\big(
	\ket{(-1)^u e^{i(\varphi_b+\theta)}\sqrt{\mu/2}}_1\ket{e^{i\theta}\sqrt{\mu/2}}_2
	\big)\times\\
	\times
	\big(
	{}_2\bra{e^{i\theta}\sqrt{\mu/2}}_1\bra{(-1)^u e^{i(\varphi_b+\theta)}\sqrt{\mu/2}}
	\big).
\end{split}
\end{equation}
The interference scheme on the receiver side is designed in such a way that only the phase difference in the two windows matters.

Let us show that the states of polarization encoding can be represented in a form equivalent to expression~(\ref{EqPhCoding}). 
Consider creation operators of photons with right and left circular polarizations:
\begin{equation}\label{EqaRLHV}
\begin{split}
	a_R^\dag&=\frac{a_H^\dag-ia_V^\dag}{\sqrt2},\\
	a_L^\dag&=\frac{a_H^\dag+ia_V^\dag}{\sqrt2}.\\
\end{split}
\end{equation}
Then, states (\ref{EqPolCoding}) can be written in the from 
\begin{equation}\label{EqPolCodingRL}
\begin{split}
	\ket{\alpha,0}_z&=\left|\frac\alpha{\sqrt2}\right\rangle_{R}
	\left|\frac\alpha{\sqrt2}\right\rangle_{L},\\
	\ket{\alpha,1}_z&=\left|\frac{i\alpha}{\sqrt2}\right\rangle_{R}
	\left|-\frac{i\alpha}{\sqrt2}\right\rangle_{L},\\
	\ket{\alpha,0}_x&=\left|\frac{(1+i)\alpha}{2}\right\rangle_{R}
	\left|\frac{(1-i)\alpha}{2}\right\rangle_{L},\\
	\ket{\alpha,1}_x&=\left|\frac{(1-i)\alpha}{2}\right\rangle_{R}
	\left|\frac{(1+i)\alpha}{2}\right\rangle_{L}.
\end{split}
\end{equation}
The first state coincides with state~(\ref{EqPhCoding}) for $u=0$ and $b=z$, 
the second state coincides with the corresponding state (\ref{EqPhCoding}) when both phases are shifted by $\pi/2$ (i.e. with $\alpha$ replaced by $e^{i\pi/2}\alpha$), 
the third state is when phases are shifted by $\pi/4$, and, finally, the fourth state is when phases are shifted by $-\pi/4$.
In view of the phase randomization (\ref{EqRhoPhRnd}) and (\ref{EqPhCodingRND}), 
these phase shifts play no role. 
Application of the inverse phase shifts in expressions~(\ref{EqPhCoding}) 
and the inverse transformation with respect to transformation~(\ref{EqaRLHV})
(with a different label for the modes, since they now do not correspond to the polarizations) gives states (\ref{EqRhoPhRnd}).

Thus, polarization and phase encoding are completely mathematically equivalent: 
They correspond to different mode expansions of the same states.
That is why we can conclude that Theorem~\ref{ThDecoy} and estimates (\ref{EqQ1L}) and (\ref{Eqe1U}) do not depend on the type of encoding: 
States of both polarization and phase encoding can be expressed in form~(\ref{EqRhoPhRnd}).

This is confirmed by the results from Ref.~\cite{KulikMol}, where for the case of phase coding, the same decoy state method estimates are obtained as for the polarization coding.
But it should be noted that an equivalence takes place when the total pulse intensity in two windows in Eq.~(\ref{EqPhCoding}) coincides with the pulse intensity for polarization coding (\ref{EqPolCoding}) $\mu=|\alpha|^2$. 
The pulse intensity in each window should be, respectively, one half.
Taking into account this remark, the estimates derived in Ref.~\cite{KulikMol} completely coincide with the well-known estimates derived in Ref.~\cite{MaLo2005}.

It is important that transformations (\ref{EqaRLHV}) preserve the photon number:
\begin{equation*}
	a_H^\dag a_H+a_V^\dag a_V=a_D^\dag a_D+a_A^\dag a_A=a_R^\dag a_R+a_L^\dag a_L.
\end{equation*}
This means that estimates related to the number of single-photon pulses and the corresponding errors obtained for one mode expansion are also valid for the other expansion of the same state.

Using the described transformations, it is possible, for example, to reformulate the photon-number splitting attack, which was originally formulated for polarization encoding, for phase encoding.
Of course, the mathematical equivalence of the two coding methods does not imply the same technological complexity of the photon-number splitting attack in these two cases.
However, when calculating the key length, we are not interested in the technological complexity of certain adversary operations, 
since we assume that the adversary can carry out any transformations allowed by the mathematical apparatus.

\section{Conclusion}

In this paper, we have presented the decoy state method in quantum cryptography, focusing on issues that are usually not covered in the literature.
The first one is a formal proof that the security of the protocol with a phase-randomized coherent state source is reduced to the security of the corresponding single-photon source protocol.
This proof strongly justifies the security of the decoy state protocol against all kinds of attacks, not just the photon-number splitting attack.

The emphasis on the photon-number splitting attack in the literature is related to the fact that this attack is optimal.
However, the security proof of the decoy state method does not rely on this fact.
Notably, we have compared the photon-number splitting attack with the beam splitting attack and both analytically and numerically demonstrated the lower efficiency of the latter.

Another issue considered in the paper is the equivalence of polarization and phase encoding from the viewpoint of the security of quantum cryptography with decoy states.
Despite the fact that, technologically, these two methods of state encoding and, therefore, attacks on the corresponding implementations are significantly different, they are completely equivalent mathematically.
In the security analysis, we assume that the adversary can realize any transformations that are allowed by the mathematical apparatus;
therefore, the argument of mathematical equivalence is sufficient to justify the equal theoretical security of protocols using these two types of encoding.

Thus, in numerous theoretical papers dealing with the decoy state BB84 protocol,
as well as in experimental implementations of this protocol, including the implementation described in Ref.~\cite{Duplinskiy2018}, the estimate of the secret key length is justified and rigorously proven.

We also note that, in Ref.~\cite{Trushechkin2017}, a method of calculating estimates for the decoy state method taking into account statistical fluctuations is proposed. 
Also, in Ref.~\cite{Bochkov2019}, a generalization of the decoy state method estimates in the case of detection-efficiency mismatch is presented.
\newpage
\section*{Acknowledgements}

The research in the Secs.~\ref{sec:SecPNS}-\ref{sec:SecEntroAccum} (the main result is Theorem~\ref{ThDecoy}) was carried out within the state assignment of the Ministry of Science and Higher Education of the Russian Federation.
Research in Secs.~\ref{sec:SecBS}-\ref{sec:SecProof} 
[the main result is inequality~(\ref{EqIneq})] and \ref{sec:SecPhaseCoding}
(proof of equivalence of polarization and phase encoding and the graph in Fig.~\ref{fig1} in Sec.~\ref{sec:SecDecoy})
was completed with the support of the grant from a Russian Science Foundation (project No. 17-71-20146).

\end{document}